\newtheorem{corollary}{Corollary}
\newtheorem{definition}{Definition}
\newtheorem{fact}{Fact}
\newtheorem{notation}{Notation}
\newtheorem{theorem}{Theorem}
\newtheorem{proposition}{Proposition}
\newtheorem{lemma}{Lemma}
\newtheorem{remark}{Remark}
\newtheorem{example}{Example}
\newtheorem{question}{Question}
\DeclareMathOperator*{\esssup}{ess\,sup}
\newcommand{\naturals}{\ensuremath{\mathbb{N}}}
\newcommand{\Reals}{\ensuremath{\mathbb{R}}}
\newcommand{\probability}{\ensuremath{\mathbb{P}}}
\newcommand{\overbar}[1]{\mkern 1.5mu\overline{\mkern-1.5mu#1\mkern-1.5mu}\mkern 1.5mu}
\begin{document}

\title{On R\'{e}nyi Entropy Power Inequalities}
\author{\vspace*{0.8cm} Eshed Ram \qquad Igal Sason
\thanks{
E. Ram and I. Sason are with the Andrew and Erna Viterbi
Faculty of Electrical Engineering, Technion--Israel
Institute of Technology, Haifa 32000, Israel. E-mails:
\{s6eshedr@tx, sason@ee\}.technion.ac.il.}
\thanks{
This work has been supported by the Israeli Science Foundation
(ISF) under Grant 12/12. It has been presented in part at the
2016 IEEE International Symposium on Information Theory, Barcelona,
Spain, July~10--15, 2016.}}

\maketitle

\begin{abstract}
This paper gives improved R\'{e}nyi entropy power inequalities (R-EPIs).
Consider a sum $S_n = \sum_{k=1}^n X_k$ of $n$ independent continuous random
vectors taking values on $\Reals^d$, and let $\alpha \in [1, \infty]$. An R-EPI
provides a lower bound on the order-$\alpha$ R\'enyi entropy power of $S_n$ that,
up to a multiplicative constant (which may depend in general on $n, \alpha, d$),
is equal to the sum of the order-$\alpha$ R\'enyi entropy powers of the $n$ random
vectors $\{X_k\}_{k=1}^n$. For $\alpha=1$, the R-EPI coincides with the well-known
entropy power inequality by Shannon. The first improved R-EPI is obtained by
tightening the recent R-EPI by Bobkov and Chistyakov which relies on the sharpened
Young's inequality. A further improvement of the R-EPI also relies on convex
optimization and results on rank-one modification of a real-valued diagonal matrix.
\end{abstract}

{\bf{Keywords}}: R\'{e}nyi entropy, entropy power inequality,
R\'{e}nyi entropy power.

%\eject
\section{Introduction}
\label{section: Introduction}

One of the well-known inequalities in information theory is
the entropy power inequality (EPI) which has
been introduced by Shannon \cite[Theorem~15]{Shannon}. Let $X$ be a
$d$-dimensional random vector with a probability density function,
let $h(X)$ be its differential entropy, and let
$N(X) = \exp\left(\tfrac2d \, h(X)\right)$
be the entropy power of $X$. The EPI states that for
independent random vectors $\{X_k\}_{k=1}^n$, the following
inequality holds:
\begin{align} \label{eq: EPI}
N\left( \sum_{k=1}^n X_k \right) \geq \sum_{k=1}^n N(X_k)
\end{align}
with equality in \eqref{eq: EPI} if and only if $\{X_k\}_{k=1}^n$
are Gaussian random vectors with proportional covariances.

The EPI has proved to be an instrumental tool in proving converse theorems
for the capacity region of the Gaussian broadcast channel \cite{Bergmans},
the Gaussian wire-tap channel \cite{CheongH}, the capacity region
of the Gaussian broadcast multiple-input multiple-output (MIMO)
channel \cite{WeingartenSS}, and a converse theorem in multi-terminal
lossy compression \cite{Oohama}. Due to its importance, the EPI
has been proved with information-theoretic tools in several insightful ways
(see, e.g., \cite{Blachman}, \cite{DemboCoverThomas}, \cite{GouSV-ISIT06},
\cite[Appendix~D]{Johnson_book}, \cite{Rioul11}, \cite{Stam}, \cite{VerduG06}); e.g.,
the proof in \cite{VerduG06} relies on fundamental relations between
information and estimation measures (\cite{GSV05}, \cite{GSV12}), together with
the simple fact that for estimating a sum of two random variables, it is preferable to
have access to the individual noisy measurements rather than to their sum.
More studies on the theme include EPIs for discrete random variables and some
analogies \cite{SaeidAbbeTeleatar, HarremoesV03, JogAnanatharam, JohnsonYu,
ShamaiWyner, SharmaDM-ISIT11, WooMadiman15}, generalized EPIs
\cite{LiuV07, MadimanBarron07, ZamirF93}, reverse EPIs
\cite{BobkovMadiman11, BobkovMadiman13, MadimanMelbourne, Xu16},
related inequalities to the EPI in terms of rearrangements \cite{WangM-IT14},
and some refined versions of the EPI for specialized distributions
\cite{Costa, Courtade16, HarremoesV03, Toscani}. An overview on EPIs is provided in
\cite{Anantharam}; we also refer the reader to a preprint of a recent survey
paper by Madiman {\em et al.} \cite{MadimanMelbourne} which addresses forward
and reverse EPIs with R\'enyi measures, and their connections with convex geometry.

\vspace*{0.1cm}
The R\'{e}nyi entropy and divergence have been introduced in \cite{Renyi},
and they evidence a long track record of usefulness in information theory
and its applications. Recent studies of the properties of these R\'{e}nyi
measures have been provided in \cite{ErvenHarremoes}, \cite{FehrB14} and
\cite{Shayevitz}. In the following, the differential R\'{e}nyi entropy
and the R\'{e}nyi entropy power are introduced.

\vspace*{0.1cm}
\begin{definition}[Differential R\'{e}nyi entropy]
\label{definition: Renyi Entropy}
Let $X$ be a random vector which takes values in $\Reals^d$,
and assume that it has a probability density function which is designated
by $f_X$. The differential R\'{e}nyi entropy of $X$ of order
$\alpha \in (0,1) \cup (1, \infty)$, denoted by $h_{\alpha}(X)$,
is given by
\begin{align}
h_{\alpha}(X) &= \frac1{1-\alpha}
\; \log \Biggl( \; \int\limits_{\Reals^d} f_X^\alpha(x)
\, \mathrm{d}x \Biggr) \label{eq0: Renyi entropy} \\
\label{eq: Renyi entropy}
&= \frac{\alpha}{1-\alpha} \; \log \|f_X\|_{\alpha}.
\end{align}
The differential R\'{e}nyi entropies of orders $\alpha = 0,1,\infty$
are defined by the continuous extension of $h_{\alpha}(X)$ for
$\alpha \in (0,1) \cup (1, \infty)$, which yields
\begin{align}
\label{RE at zero}
& h_0(X) = \log \, \lambda\bigl(\mathrm{supp}(f_X)\bigr),  \\
\label{RE at 1}
& h_1(X) = h(X) = -\int\limits_{\Reals^d} f_X(x) \, \log f_X(x)
\, \mathrm{d}x, \\
\label{RE at infinity}
& h_{\infty}(X) = -\log \bigl( \esssup(f_X) \bigr)
\end{align}
where $\lambda$ in \eqref{RE at zero} is the
Lebesgue measure in $\Reals^d$.
\end{definition}

\begin{definition}[R\'{e}nyi entropy power]
\label{definition: Renyi entropy power}
For a $d$-dimensional random vector $X$ with density, the R\'{e}nyi entropy
power of order $\alpha \in [0, \infty]$ is given by
\begin{align} \label{eq: Renyi entropy power}
N_{\alpha}(X) = \exp\left(\tfrac{2}{d} \, h_{\alpha}(X) \right).
\end{align}
\end{definition}

Since $h_{\alpha}(X)$ is specialized to the Shannon entropy $h(X)$
for $\alpha=1$, the possibility of generalizing the EPI with R\'{e}nyi entropy
powers has emerged. This question is stated as follows:

\vspace*{0.1cm}
\begin{question} \label{question: R-EPI}
Let $\{X_k\}$ be independent $d$-dimensional random vectors with probability
density functions, and let $\alpha \in [0, \infty]$ and $n \in \naturals$.
Does a R\'enyi entropy power inequality (R-EPI) of the form
\begin{align} \label{Intro: R-EPI}
N_{\alpha}\left(\sum_{k=1}^n X_k \right) \geq c_{\alpha}^{(n,d)} \,
\sum_{k=1}^n N_{\alpha}(X_k)
\end{align}
hold for some positive constant $c_{\alpha}^{(n,d)}$ (which may depend on the
order $\alpha$, dimension $d$, and number of summands $n$) ?
\end{question}

\vspace*{0.1cm}
In \cite[Theorem~2.4]{JohnsonV07}, a sort of an R-EPI for the R\'{e}nyi
entropy of order $\alpha \geq 1$ has been derived with some analogy to
the classical EPI; this inequality, however, does not apply the usual
convolution unless $\alpha=1$. In \cite[Conjectures~4.3, 4.4]{WangM-IT14}, Wang
and Madiman conjectured an R-EPI for an arbitrary finite number of
independent random vectors in $\Reals^d$ for $\alpha > \frac{d}{d+2}$.

Question~\ref{question: R-EPI} has been recently addressed by
Bobkov and Chistyakov \cite{BobkovC15}, showing that \eqref{Intro: R-EPI}
holds with
\begin{align} \label{eq: c for BC}
c_{\alpha} = \tfrac1{e} \, \alpha^{\frac1{\alpha-1}}, \quad \forall \, \alpha > 1
\end{align}
independently of the values of $n, d$. It is the purpose of this paper to
derive some improved R-EPIs for $\alpha>1$ (the case of $\alpha=1$
refers to the EPI \eqref{eq: EPI}). A study of Question~\ref{question: R-EPI} for
$\alpha \in (0, 1)$ is currently an open problem (see \cite[p.~709]{BobkovC15}).

In view of the close relation in \eqref{eq: Renyi entropy} between the
(differential) R\'{e}nyi entropy and the $L_\alpha$ norm, the sharpened
version of Young's inequality plays a key role in \cite{BobkovC15} for
the derivation of an R-EPI, as well as in our paper for the derivation of
some improved R-EPIs. The sharpened version of Young's inequality was also
used by Dembo {\em et al.} \cite{DemboCoverThomas} for proving the EPI.

For $\alpha \in (1, \infty)$, let $\alpha' = \frac{\alpha}{\alpha-1}$ be
H\"older's conjugate. For $\alpha > 1$, Theorem~\ref{theorem: REPI1} provides
a new tighter constant in comparison to \eqref{eq: c for BC} which gets the form
\begin{align} \label{eq: R-EPI1}
c_\alpha^{(n)} = \alpha^{\frac1{\alpha-1}} \left(1-\frac1{n\alpha'}\right)^{n\alpha'-1}
\end{align}
independently of the dimension $d$. The new R-EPI with the constant in
\eqref{eq: R-EPI1} asymptotically coincides with the tight bound by Rogozin
\cite{Rogozin} when $\alpha \to \infty$ and $n=2$, and it also asymptotically
coincides with the R-EPI in \cite{BobkovC15} when $n \to \infty$.
Moreover, the R-EPI with the new constant in \eqref{eq: R-EPI1} is further
improved in Theorem~\ref{theorem: tightest REPI} by a more involved analysis
which relies on convex analysis and some interesting results from matrix theory;
the latter result yields a closed-form solution for $n=2$.

This paper is organized as follows:
In Section~\ref{section: Preliminaries}, preliminary material and notation
are introduced. A new R-EPI is derived in Section~\ref{section: A New Renyi EPI}
for $\alpha > 1$, and special cases of this improved bound are studied.
Section~\ref{section: further tightening of R-EPI} derives a strengthened R-EPI
for a sum of $n \geq 2$ random variables; for $n=2$, it is specialized to a bound
which is expressed in a closed form; its computation for $n>2$ requires a numerical
optimization which is easy to perform. Section~\ref{section: example} exemplifies
numerically the tightness of the new R-EPIs in comparison to some previously reported
bounds, and finally Section~\ref{section: summary} summarizes the paper.

\section{Analytical Tools}
\label{section: Preliminaries}
This section includes notation and tools which are essential to
the analysis in this paper. It starts with the sharpened Young's
inequality, followed by results on rank-one modification of a
symmetric eigenproblem \cite{Bunch}. We also include here some
properties of the differential R\'{e}nyi entropy and R\'{e}nyi
entropy power which are useful to the analysis in this paper.

\subsection{Basic Inequalities}
\label{subsection:Pre - Holder's and Young's Inequalities}
The derivation of the R-EPIs in this work partially relies on the
sharpened Young's inequality and the monotonicity of the R\'enyi
entropy in its order. For completeness, we introduce these results
in the following.
\begin{notation} \label{notation:Pre - holder conjugate}
For $\alpha>0$, let $\alpha' = \frac{\alpha}{\alpha-1}$, i.e.,
$\frac1{\alpha}+ \frac1{\alpha'}=1.$
\end{notation}
Note that $\alpha>1$ if and only if $\alpha'>0$; if $\alpha=1$, we define
$\alpha'=\infty$. This notation is known as H\"{o}lder's conjugate.
\begin{fact}[Monotonicity of the R\'enyi entropy] \label{fact:Pre - Holder inequality}
The R\'enyi entropy, $h_{\alpha}(X)$, is monotonically non-increasing in $\alpha$.
\end{fact}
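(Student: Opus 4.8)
The plan is to reduce the statement to the classical monotonicity of power means and then invoke Jensen's inequality. Writing $\mu$ for the probability measure on $\Reals^d$ whose density with respect to Lebesgue measure is $f_X$, and regarding $Z = f_X(X)$ as a random variable under $X\sim\mu$, the starting observation is that for every $\alpha \in (0,1)\cup(1,\infty)$,
\begin{align}
\int_{\Reals^d} f_X^\alpha(x)\,\mathrm{d}x = \int_{\Reals^d} f_X^{\alpha-1}(x)\,f_X(x)\,\mathrm{d}x = E_\mu\bigl[Z^{\alpha-1}\bigr].
\end{align}
Substituting this into \eqref{eq0: Renyi entropy} and putting $p = \alpha-1$ gives
\begin{align}
h_\alpha(X) = \frac{1}{1-\alpha}\,\log E_\mu\bigl[Z^{\alpha-1}\bigr] = -\log\Bigl(E_\mu\bigl[Z^{p}\bigr]\Bigr)^{1/p} =: -\log M_p(Z),
\end{align}
so that $h_\alpha(X)$ is expressed through the power mean $M_p(Z)=(E_\mu[Z^p])^{1/p}$ of order $p=\alpha-1$. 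Since $-\log$ is strictly decreasing and $p=\alpha-1$ is increasing in $\alpha$, it suffices to show that $p\mapsto M_p(Z)$ is non-decreasing.

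The second step is to establish $M_r(Z)\le M_s(Z)$ for $r<s$, which is a consequence of Jensen's inequality. First I would treat the case $0<r<s$ by applying Jensen to the convex map $u\mapsto u^{s/r}$ with the random variable $Z^r$, which yields $(E_\mu[Z^r])^{s/r}\le E_\mu[Z^s]$ and hence $M_r(Z)\le M_s(Z)$ after taking the $\tfrac1s$-th power. The case $r<s<0$ would then follow from the identity $M_p(Z)=1/M_{-p}(1/Z)$ together with the positive case applied to $1/Z$, and the mixed case $r<0<s$ would be handled by comparing both sides to the geometric mean $M_0(Z)=\exp\bigl(E_\mu[\log Z]\bigr)$, using Jensen for the concave logarithm to obtain $M_r(Z)\le M_0(Z)\le M_s(Z)$. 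Combining the three cases gives monotonicity of $M_p$ over all real $p$, and therefore the desired monotonicity of $h_\alpha(X)$ on $(0,1)\cup(1,\infty)$.

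Finally I would dispose of the boundary orders $\alpha\in\{0,1,\infty\}$. Here $p=\alpha-1$ takes the values $-1,0$ and the two limits $p\to\pm\infty$, and the definitions \eqref{RE at zero}, \eqref{RE at 1}, \eqref{RE at infinity} are recovered as the corresponding specializations of $-\log M_p(Z)$: the value $p=0$ gives the geometric mean and hence $h_1(X)=-E_\mu[\log Z]=h(X)$; letting $p\to\infty$ gives $M_\infty(Z)=\esssup f_X$, matching \eqref{RE at infinity}; and $p=-1$ reproduces $h_0(X)=\log E_\mu[1/Z]=\log\lambda\bigl(\mathrm{supp}(f_X)\bigr)$ as in \eqref{RE at zero}. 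Because the power means are monotone in $p$ and reduce to these geometric-mean and essential-supremum expressions in the respective limits, the inequality extends to the endpoints by continuity.

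The main obstacle I anticipate is bookkeeping rather than conceptual: the integrals defining $M_p(Z)$ may be infinite for some orders (for instance when $\lambda(\mathrm{supp}(f_X))=\infty$, or when a moment diverges), so the whole argument must be phrased in the extended reals and the Jensen steps justified even when one side equals $+\infty$. Care is also needed at the transition through $p=0$, i.e.\ $\alpha=1$, where the geometric-mean representation $M_0(Z)=\exp\bigl(E_\mu[\log Z]\bigr)$ must be obtained as a genuine limit of $M_p(Z)$ so that monotonicity is not broken across the Shannon-entropy order.
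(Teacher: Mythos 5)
The paper offers no proof of Fact~\ref{fact:Pre - Holder inequality}: it is quoted as a known property (``for completeness'') and then used as a primitive, together with \eqref{eq: RE2}, to deduce Corollary~\ref{corollary:Pre - Holder for densities}. So there is no internal argument to compare against; what matters is whether your blind proof is sound, and it is. The reduction $h_\alpha(X) = -\log M_p(Z)$ with $Z = f_X(X)$ under $X \sim \mu$ and $p = \alpha-1$, followed by the generalized (power) mean inequality, is the standard route, and your three-case Jensen argument (positive orders; negative orders via $M_p(Z) = 1/M_{-p}(1/Z)$; mixed orders through the geometric mean $M_0$) is complete, noting that $Z>0$ holds $\mu$-almost surely so negative powers are well defined. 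Two small corrections. First, since $\alpha$ ranges over $[0,\infty]$, the exponent $p = \alpha - 1$ ranges over $[-1,\infty]$; the limit $p \to -\infty$ you invoke never arises. Second, at $\alpha = 0$ your harmonic-mean computation yields $E_\mu[1/Z] = \lambda\bigl(\{x : f_X(x) > 0\}\bigr)$, which matches \eqref{RE at zero} only under the convention that $\mathrm{supp}(f_X)$ means the positivity set of $f_X$ rather than its closure; this slack is already present in the paper's own definition, so it is not a defect of your argument. Finally, the integrability bookkeeping you flag is genuinely harmless: for a probability measure, $E_\mu[Z^r] \leq 1 + E_\mu[Z^s]$ whenever $0 < r < s$, so infinite values propagate in the direction compatible with the claimed inequalities, and the whole chain holds in the extended reals.
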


From \eqref{eq: Renyi entropy}, it follows that for $\alpha \in (0,1) \cup (1, \infty)$,
if $f$ is a probability density function of a $d$-dimensional vector $X$, then
\begin{align}
h_{\alpha}(X) = -\log \bigl( \|f\|_{\alpha}^{\alpha'} \bigr). \label{eq: RE2}
\end{align}
A useful consequence of Fact~\ref{fact:Pre - Holder inequality} and \eqref{eq: RE2} is the following
result (a weaker version of it is given in \cite[Lemma~1]{BobkovC15}):
\begin{corollary}\label{corollary:Pre - Holder for densities}
Let $\alpha \in (0,1) \cup (1, \infty)$, and let $f \in L^\alpha(\Reals^d)$ be a probability
density function (i.e., $f$ is a non-negative function with $\|f\|_1=1$).
Then, for every $\beta \in (0, \alpha)$ with $\beta \neq 1$,
\begin{align} \label{inequality:Pre -  Holder for densities}
\|f\|_{\beta}^{\beta'} \leq  \|f\|_{\alpha}^{\alpha'}.
\end{align}
\end{corollary}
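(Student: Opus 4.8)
The plan is to read the claimed norm inequality as a direct restatement of the monotonicity of the R\'enyi entropy, transported through the identity \eqref{eq: RE2}. First I would associate to $f$ a $d$-dimensional random vector $X$ having $f$ as its probability density function; this is legitimate precisely because $f$ is assumed non-negative with $\|f\|_1 = 1$. Since both orders satisfy $\beta \in (0,1)\cup(1,\infty)$ (as $\beta \neq 1$) and $\alpha \in (0,1)\cup(1,\infty)$, the representation \eqref{eq: RE2} of the R\'enyi entropy in terms of the $L_\alpha$ norm applies at each of them, giving
\begin{align}
h_\beta(X) = -\log\bigl(\|f\|_\beta^{\beta'}\bigr), \qquad h_\alpha(X) = -\log\bigl(\|f\|_\alpha^{\alpha'}\bigr). \notag
\end{align}

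Next I would invoke Fact~\ref{fact:Pre - Holder inequality}: because $\beta < \alpha$ and $h_\gamma(X)$ is non-increasing in the order $\gamma$, we obtain $h_\beta(X) \geq h_\alpha(X)$. Substituting the two displayed expressions, cancelling the common minus sign, and using that the logarithm and the exponential are monotone then yields $\|f\|_\beta^{\beta'} \leq \|f\|_\alpha^{\alpha'}$, which is exactly \eqref{inequality:Pre -  Holder for densities}.

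Frankly, there is no serious obstacle here, since the analytic content has been absorbed into Fact~\ref{fact:Pre - Holder inequality}. The only points deserving a line of care are: (i) applying \eqref{eq: RE2} at \emph{both} orders and tracking the signs of $\beta'$ and $\alpha'$ correctly, so that the inequality is not inadvertently reversed --- note that no case split on these signs is required, even though $\beta' < 0$ for $\beta \in (0,1)$ and $\alpha' < 0$ for $\alpha \in (0,1)$, because \eqref{eq: RE2} holds uniformly for every admissible order; and (ii) ensuring the order $\gamma = 1$ is never invoked as an endpoint of the comparison, which the hypotheses $\beta \neq 1$ and $\alpha \neq 1$ guarantee, while the monotonicity of $h_\gamma$ across $\gamma = 1$ (handled via the continuous extension in Definition~\ref{definition: Renyi Entropy}) keeps the comparison $h_\beta(X) \geq h_\alpha(X)$ valid even when $\beta < 1 < \alpha$.
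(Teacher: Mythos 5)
Your proposal is correct and follows exactly the paper's own route: the paper presents this corollary as a direct consequence of Fact~\ref{fact:Pre - Holder inequality} (monotonicity of $h_\alpha$ in $\alpha$) combined with the identity \eqref{eq: RE2}, which is precisely your argument. The care you take with the signs of $\beta'$ and $\alpha'$ and with the order $\gamma=1$ is sensible but introduces nothing beyond what the paper implicitly relies on.
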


\begin{notation} \label{notation:Pre - At}
For every $t \in (0,1) \cup (1, \infty)$, let
\begin{align} \label{eq:Pre - At}
A_t = t^{\frac1{t}} \, |t'|^{-\frac1{|t'|}}
\end{align}
and let $A_1 = A_{\infty} = 1$. Note that for $t \in [0, \infty]$
\begin{align} \label{eq: property A}
A_{t'} = \frac1{A_t}.
\end{align}
\end{notation}
The sharpened Young's inequality, first derived
by Beckner \cite{Beckner} and re-derived with
alternative proofs in, e.g., \cite{Barthe} and
\cite{BrascampLieb} is given as follows:
\begin{fact}[Sharpened Young's inequality]
\label{fact:Pre - Sharpened Young's inequality}
Let $p,q,r>0$ satisfy
\begin{align} \label{eq: Young's condition}
\frac1{p}+ \frac1{q} = 1 + \frac1{r},
\end{align}
let $f \in L^p(\Reals^d)$ and $g \in L^q(\Reals^d)$
be non-negative functions, and let $f \ast g$ denote
their convolution.
\begin{itemize}
\item If $p,q,r>1$, then
\begin{align} \label{eq: Young's ineq. 1}
\|f \ast g\|_r \leq \left( \frac{A_p A_q}{A_r}\right)^{\frac{d}{2}}
\|f\|_p \, \|g\|_q.
\end{align}
\item If $p,q,r<1$, then
\begin{align} \label{eq: Young's ineq. 2}
\|f \ast g\|_r \geq \left( \frac{A_p A_q}{A_r}\right)^{\frac{d}{2}}
\|f\|_p \, \|g\|_q.
\end{align}
\end{itemize}
Furthermore, \eqref{eq: Young's ineq. 1} and \eqref{eq: Young's ineq. 2}
hold with equalities if and only if $f$ and $g$ are Gaussian probability
density functions.
\end{fact}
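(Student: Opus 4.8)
The plan is to recast Fact~\ref{fact:Pre - Sharpened Young's inequality} as the computation of the best constant in the convolution inequality and to show that this constant is attained by Gaussians. For $p,q,r>1$ obeying \eqref{eq: Young's condition}, set
\begin{align}
C = \sup_{f,g} \frac{\|f \ast g\|_r}{\|f\|_p \, \|g\|_q},
\end{align}
the supremum running over nonzero nonnegative $f \in L^p(\Reals^d)$ and $g \in L^q(\Reals^d)$; the goal is to prove $C = \bigl(A_p A_q / A_r\bigr)^{d/2}$ and that it is attained exactly when $f$ and $g$ are Gaussian. Ordinary Young's inequality already gives $C \le 1$ and reduces the task to sharpening this constant, while a structural observation that will be used to normalize the extremal Gaussians is that, under the exponent relation \eqref{eq: Young's condition}, the ratio is invariant under the joint dilation $f(x) \mapsto f(\lambda x)$, $g(x) \mapsto g(\lambda x)$ and under separate positive rescalings of $f$ and $g$. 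For the regime $p,q,r<1$ I would run the mirror-image argument, with the supremum replaced by an infimum and all inequalities reversed.

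First I would reduce to radially symmetric, nonincreasing functions. By Riesz's rearrangement inequality, replacing $f$ and $g$ by their symmetric-decreasing rearrangements $f^{\ast}$ and $g^{\ast}$ preserves $\|f\|_p$ and $\|g\|_q$ while it cannot decrease $\|f \ast g\|_r$; hence the ratio does not decrease and the supremum defining $C$ is already attained within the class of symmetric-decreasing densities. This both shrinks the search space and puts the problem into the form needed for the variational arguments below.

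The crux---and the step I expect to be the main obstacle---is to prove that, within the symmetric-decreasing class, the ratio is maximized only by Gaussians. The cleanest rigorous input is Lieb's theorem that Gaussian kernels admit only Gaussian maximizers: one first secures a maximizing pair by a concentration-compactness argument, using the dilation and rearrangement normalizations above to prevent mass from escaping to infinity or concentrating at a point, then writes the Euler--Lagrange equations for the optimal ratio and shows, using the rotational and scaling symmetries, that their only solutions are Gaussian. Beckner's original derivation instead exploits that the extremal property is stable under the convolution-and-rescale map whose iterates converge, by the central limit theorem, to Gaussians, so any extremizer must itself be Gaussian; Barthe's proof transports Gaussians onto $f$ and $g$ by the Brenier map and estimates the three norms directly. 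Each route yields both the value of $C$ and the fact that equality holds only for Gaussians.

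Finally I would evaluate the ratio on centered Gaussians. Taking $f(x) = e^{-a|x|^2}$ and $g(x) = e^{-b|x|^2}$, the convolution $f \ast g$ is again Gaussian and all three quantities $\|f\|_p$, $\|g\|_q$, $\|f \ast g\|_r$ reduce to elementary Gaussian integrals; optimizing over $a,b>0$ (equivalently, fixing the joint-dilation normalization) and inserting \eqref{eq: Young's condition} produces exactly $\bigl(A_p A_q/A_r\bigr)^{d/2}$ with $A_t$ as in Notation~\ref{notation:Pre - At}. Because the whole problem factorizes across the $d$ coordinates for product Gaussians, the $d$-dimensional constant is the $d$-th power of the one-dimensional one, which is the source of the exponent $\tfrac{d}{2}$; this computation closes the argument and certifies the stated equality condition.
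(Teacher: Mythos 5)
This statement is presented in the paper as a \emph{Fact}, not a theorem the authors prove: the paper simply imports the sharpened Young's inequality from the literature, citing Beckner, Brascamp--Lieb, and Barthe. Your proposal, by contrast, attempts an actual proof outline, but its essential content (existence of extremizers and the fact that only Gaussians extremize) is itself delegated to those same references --- Lieb's Gaussian-maximizer theorem, Beckner's central-limit/tensorization argument, Barthe's transport proof. As a self-contained proof it therefore does not add much beyond what a citation accomplishes; that by itself would be acceptable, since the paper does no more.

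There is, however, a genuine gap in the one place where you do commit to an argument: the claim that the regime $p,q,r<1$ follows by a ``mirror-image argument, with the supremum replaced by an infimum and all inequalities reversed.'' This fails for two reasons. First, your reduction to symmetric-decreasing functions uses Riesz's rearrangement inequality together with the duality $\|f \ast g\|_r = \sup_{\|h\|_{r'} \leq 1} \int (f \ast g)\, h$; for $r<1$ the space $L^r$ is not normed and has trivial dual, so this representation is unavailable. Second, and more fundamentally, the Riesz rearrangement inequality does not reverse: rearrangement always \emph{increases} the convolution functional, which is the wrong direction when you are minimizing. The reverse inequality \eqref{eq: Young's ineq. 2} (due to Leindler and Brascamp--Lieb, with H\"older conjugates $p', q', r'$ negative) requires a genuinely different mechanism --- e.g.\ the Pr\'ekopa--Leindler inequality or Barthe's reverse mass-transport argument --- not a sign flip of the forward proof. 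A corrected write-up should either cite the reverse case separately or present one of these arguments; as written, half of the Fact is unsupported.
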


Note that the condition in \eqref{eq: Young's condition} can be expressed
in terms of the H\"older's conjugates as follows:
\begin{align} \label{alternative form}
\frac1{p'}+ \frac1{q'} = \frac1{r'}.
\end{align}
By using \eqref{alternative form} and mathematical induction,
the sharpened Young's inequality can be extended to more than
two functions as follows:
\begin{corollary}\label{corollary: Young's ineq.}
Let $\nu, \{\nu_k\}_{k=1}^n > 0$ satisfy
$\sum_{k=1}^n\frac1{\nu_k'}=\frac1{\nu'}$,
let
\begin{align} \label{eq:Pre - Multiple Beckner's Best Constant}
A = \left(\frac1{A_\nu} \prod_{k=1}^n A_{\nu_k}\right)^{\frac{d}{2}}
\end{align}
where the right side in \eqref{eq:Pre - Multiple Beckner's Best Constant}
is defined by \eqref{eq:Pre - At}, and let $f_k \in L^{\nu_k}(\Reals^d)$
be non-negative functions.
\begin{itemize}
\item If $\nu,\{\nu_k\}_{k=1}^n>1$, then
\begin{align} \label{eq: Young's ineq. 3}
\|f_1 \ast \ldots \ast f_n\|_\nu \leq A \prod_{k=1}^n \|f_k\|_{\nu_k} .
\end{align}
\item If $\nu,\{\nu_k\}_{k=1}^n<1$, then
\begin{align} \label{eq: Young's ineq. 4}
\|f_1 \ast \ldots \ast f_n\|_\nu \geq A \prod_{k=1}^n \|f_k\|_{\nu_k}
\end{align}
\end{itemize}
with equalities in \eqref{eq: Young's ineq. 3} and
\eqref{eq: Young's ineq. 4} if and only if $f_k$ are
scaled versions of Gaussian probability densities for all $k$.
\end{corollary}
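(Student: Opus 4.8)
The plan is to argue by induction on the number of summands $n$, using the two-function sharpened Young's inequality (Fact~\ref{fact:Pre - Sharpened Young's inequality}) as the engine. The base case $n=2$ is precisely Fact~\ref{fact:Pre - Sharpened Young's inequality} rewritten through the H\"older-conjugate form of Young's condition \eqref{alternative form}: letting $(\nu,\nu_1,\nu_2)$ play the roles of $(r,p,q)$, the constraint $\frac1{\nu_1'}+\frac1{\nu_2'}=\frac1{\nu'}$ is exactly \eqref{alternative form}, and the constant $\bigl(\tfrac{A_{\nu_1}A_{\nu_2}}{A_\nu}\bigr)^{d/2}$ coincides with $A$ in \eqref{eq:Pre - Multiple Beckner's Best Constant} specialized to $n=2$.

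For the inductive step I would peel off the last factor. Set $g=f_1\ast\cdots\ast f_{n-1}$ and define an intermediate exponent $\mu$ by
\begin{align}
\frac1{\mu'} = \sum_{k=1}^{n-1}\frac1{\nu_k'} = \frac1{\nu'}-\frac1{\nu_n'}, \notag
\end{align}
so that the pair $(\mu,\nu_n)$ satisfies Young's condition relative to $\nu$. First I would verify that $\mu$ lands in the same regime as the data: when $\nu,\{\nu_k\}>1$ each $\tfrac1{\nu_k'}>0$, so $0<\tfrac1{\mu'}<\tfrac1{\nu'}<1$, forcing $\mu\in(1,\infty)$; when $\nu,\{\nu_k\}<1$ each $\tfrac1{\nu_k'}<0$, so $\tfrac1{\mu'}<0$ and $\mu\in(0,1)$. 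Applying the appropriate case of Fact~\ref{fact:Pre - Sharpened Young's inequality} to $\|g\ast f_n\|_\nu$ and then the induction hypothesis to $\|g\|_\mu=\|f_1\ast\cdots\ast f_{n-1}\|_\mu$ (legitimate because the first $n-1$ exponents obey $\sum_{k=1}^{n-1}\tfrac1{\nu_k'}=\tfrac1{\mu'}$) chains the two estimates. The factor $A_\mu$ produced by the outer step is exactly cancelled by the $\tfrac1{A_\mu}$ in the inductive constant, leaving $\bigl(\tfrac1{A_\nu}\prod_{k=1}^n A_{\nu_k}\bigr)^{d/2}=A$; this telescoping is the only computation and it is routine. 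The inequality direction is preserved at each stage, yielding $\le$ in the regime $>1$ and $\ge$ in the regime $<1$.

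For the equality condition I would track when the chain is tight. Equality in the outer two-function step forces $g$ and $f_n$ to be Gaussian, while equality in the invoked induction hypothesis forces $f_1,\dots,f_{n-1}$ to be scaled Gaussians; conversely, if every $f_k$ is a scaled Gaussian then $g$ is Gaussian (convolution preserves Gaussianity) and both applied inequalities are tight, hence so is the chain. The delicate point is reconciling the two routes: the outer step only asserts that $g$ is Gaussian, so to conclude each $f_k$ is Gaussian I would invoke Cram\'er's decomposition theorem, which guarantees that a Gaussian convolution of independent densities has only Gaussian factors.

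I expect the main obstacle to be the bookkeeping of the equality case rather than the norm estimate: confirming that the intermediate exponent $\mu$ always falls in the same open regime as the data (so that exactly one of \eqref{eq: Young's ineq. 1}/\eqref{eq: Young's ineq. 2} applies at every stage) and aligning the Gaussianity forced at the outer step with that demanded by the induction hypothesis. Once $\mu$ is correctly placed, the constant telescoping and the preservation of the inequality direction follow mechanically.
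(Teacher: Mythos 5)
Your proposal is correct and takes essentially the same route as the paper, which proves the corollary exactly this way: the two-function sharpened Young's inequality (Fact~\ref{fact:Pre - Sharpened Young's inequality}) combined with the conjugate form \eqref{alternative form} of Young's condition and induction on $n$, with the constants telescoping as you describe. One minor remark: your appeal to Cram\'er's decomposition theorem is superfluous, since equality in the invoked induction hypothesis already forces $f_1,\ldots,f_{n-1}$ to be scaled Gaussians and equality in the outer step forces $f_n$ to be Gaussian, so there is nothing left to reconcile.
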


\subsection{Rank-One Modification of a Symmetric Eigenproblem}
\label{subsection:Pre - Rank-One Modification}
This section is based on a paper by Bunch {\em et al.} \cite{Bunch}
which addresses the eigenvectors and eigenvalues (a.k.a.
eigensystem) of rank-one modification of a real-valued
diagonal matrix.
We use in this paper the following result \cite{Bunch}:
\begin{fact} \label{fact: Bunch}
Let $D \in \Reals^{n \times n}$ be a diagonal matrix with the
eigenvalues $d_1 \leq d_2 \leq \ldots \leq d_n $.
Let $z \in \Reals^n$ such that $\|z\|_2 = 1$ and let
$\rho \in \Reals$. Let
$\lambda_1 \leq \lambda_2 \leq \ldots \leq \lambda_n$
be the eigenvalues of the rank-one modification of $D$
which is given by $C=D+\rho z z^T$. Then,
\begin{enumerate}
\item \label{item1: Bunch}
$\lambda_i = d_i + \rho \mu_i$, where $\sum_{i=1}^n \mu_i = 1$
and $\mu_i \geq 0$ for all $i \in \{1, \ldots, n\}$.
\item \label{item2: Bunch}
If $\rho > 0$, then the following interlacing property holds:
\begin{align} \label{eq: Bunch1}
d_1 \leq \lambda_1 \leq d_2 \leq \lambda_2 \leq \ldots \leq d_n \leq \lambda_n
\end{align}
and, if $\rho < 0$, then
\begin{align} \label{eq: Bunch2}
\lambda_1 \leq  d_1 \leq \lambda_2 \leq d_2 \leq \ldots \leq \lambda_n \leq d_n.
\end{align}
\item \label{item3: Bunch}
If all the eigenvalues of $D$ are different, all the entries of $z$ are non-zero,
and $\rho \neq 0$, then inequalities \eqref{eq: Bunch1} and \eqref{eq: Bunch2} are
strict. For $i \in \{1, \ldots, n\}$, the eigenvalue $\lambda_i$ is a zero of
\begin{align} \label{eq:Pre - Rank1 Theorem, W function }
W(x) = 1+\rho \sum_{j=1}^n \frac{z_i^2}{d_j-x}.
\end{align}
\end{enumerate}
\end{fact}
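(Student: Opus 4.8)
The plan is to reduce everything to the analysis of a single scalar \emph{secular function} obtained from the characteristic polynomial of $C$, to treat the generic case (distinct $d_i$, nonzero $z_i$, $\rho\neq 0$) first, and then to recover the general statement by a continuity argument. I would begin with the trace identity, which is immediate: since $\mathrm{tr}(zz^T)=\|z\|_2^2=1$, we have $\sum_{i=1}^n \lambda_i = \mathrm{tr}(C) = \mathrm{tr}(D)+\rho = \sum_{i=1}^n d_i + \rho$. Setting $\mu_i := (\lambda_i-d_i)/\rho$ when $\rho\neq 0$ (and, when $\rho=0$, noting $C=D$ so $\lambda_i=d_i$ and the $\mu_i$ may be chosen arbitrarily with unit sum), this yields $\sum_{i=1}^n \mu_i = 1$ at once. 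The nonnegativity $\mu_i\ge 0$ in item~\ref{item1: Bunch} is then equivalent to the one-sided bounds $\lambda_i\ge d_i$ (for $\rho>0$) and $\lambda_i\le d_i$ (for $\rho<0$), which are part of the interlacing in item~\ref{item2: Bunch}; I would therefore defer it until after the interlacing is established.

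For the generic case I would factor the characteristic polynomial of $C$ by means of the matrix determinant lemma. For $x\notin\{d_1,\dots,d_n\}$ the matrix $D-xI$ is invertible, so
\begin{align}
\det(C-xI) = \det(D-xI+\rho zz^T) = \det(D-xI)\Bigl(1+\rho\, z^T (D-xI)^{-1} z\Bigr) = \Bigl(\prod_{j=1}^n (d_j-x)\Bigr) W(x),
\end{align}
where $W$ is the secular function of \eqref{eq:Pre - Rank1 Theorem, W function } (with the numerator read as $z_j^2$). Since $\prod_j(d_j-x)\neq 0$ off the poles, every eigenvalue of $C$ that is not one of the $d_j$ must be a zero of $W$, which is the last assertion of item~\ref{item3: Bunch}; it then remains only to locate these zeros.

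The key computation is $W'(x)=\rho\sum_{j=1}^n \frac{z_j^2}{(d_j-x)^2}$, whose sign equals $\mathrm{sign}(\rho)$ on each interval between consecutive poles, because every $z_j\neq 0$; thus $W$ is \emph{strictly monotone} there. Examining the behavior of $W$ as $x\to d_j^{\pm}$ (where a single term blows up to $\pm\infty$) and as $x\to\pm\infty$ (where $W\to 1$), the intermediate value theorem produces exactly one root of $W$ in each of the $n$ relevant intervals: for $\rho>0$ these are $(d_1,d_2),\dots,(d_{n-1},d_n)$ and $(d_n,\infty)$, giving $d_1<\lambda_1<d_2<\cdots<d_n<\lambda_n$; for $\rho<0$ they are $(-\infty,d_1)$ and $(d_1,d_2),\dots,(d_{n-1},d_n)$, giving $\lambda_1<d_1<\cdots<\lambda_n<d_n$. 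This settles the strict interlacing of item~\ref{item3: Bunch} and the inequalities of item~\ref{item2: Bunch} in the generic case, and hence $\mu_i=(\lambda_i-d_i)/\rho\ge 0$, completing item~\ref{item1: Bunch}.

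Finally, to obtain the (non-strict) interlacing of item~\ref{item2: Bunch} in full generality—allowing repeated $d_i$ and vanishing entries of $z$—I would invoke a perturbation argument: approximate $(D,z)$ by a sequence of pairs having distinct diagonal entries and nonzero coordinates (renormalizing so $\|z\|_2=1$), apply the strict result to each, and pass to the limit. Because the ordered eigenvalues of a symmetric matrix depend continuously on its entries, the strict chains degrade to the weak inequalities \eqref{eq: Bunch1}--\eqref{eq: Bunch2}. This degenerate case is the only real obstacle: when some $z_j=0$ the coordinate direction $e_j$ is an eigenvector of $C$ with eigenvalue $d_j$, so $d_j$ may itself be an eigenvalue and $W$ acquires a removable pole, which breaks the clean one-root-per-interval count; the continuity argument sidesteps this bookkeeping. (Alternatively, the weak interlacing follows directly from Weyl's inequalities applied to the rank-one perturbation $\rho zz^T$.)
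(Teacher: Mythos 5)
You should first note that the paper gives no proof of Fact~\ref{fact: Bunch} at all: it is quoted as a known result from the cited work of Bunch, Nielsen and Sorensen, so there is no internal argument to compare yours against; what you have supplied is a proof where the paper has none. Your blind proof is correct, and it is in essence the classical secular-equation argument on which that literature is built. The matrix determinant lemma gives $\det(C-xI)=\prod_{j}(d_j-x)\,W(x)$ away from the poles; the sign of $W'(x)=\rho\sum_{j} z_j^2/(d_j-x)^2$ makes $W$ strictly monotone between consecutive poles; and the limiting behavior at the poles and at $\pm\infty$ yields, via the intermediate value theorem, exactly one root in each of the $n$ relevant intervals. One step you leave implicit deserves a sentence: having produced $n$ \emph{distinct} roots of $W$, each of which is an eigenvalue of $C$, you must invoke the fact that $C$ has only $n$ eigenvalues counted with multiplicity to conclude that these roots exhaust the spectrum; this is what simultaneously shows that every $\lambda_i$ is a zero of $W$ (the last assertion of item~3, whose statement in the paper contains a typo --- the numerator $z_i^2$ should read $z_j^2$, as you correctly observed) and that no $d_j$ is an eigenvalue of $C$ in the generic case, so the one-root-per-interval count really does place all of the $\lambda_i$. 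The remaining pieces are sound: the trace identity disposes of $\sum_i \mu_i = 1$; deferring $\mu_i \geq 0$ to the interlacing bounds $d_i \leq \lambda_i$ (for $\rho>0$) and $\lambda_i \leq d_i$ (for $\rho<0$) is legitimate, as is the separate treatment of $\rho=0$; and the passage from the generic case to the weak inequalities \eqref{eq: Bunch1}--\eqref{eq: Bunch2} by continuity of the ordered eigenvalues under perturbation (or, more directly, by Weyl's inequalities for a rank-one perturbation) is standard and correctly sidesteps the bookkeeping problems caused by repeated $d_i$ or vanishing entries of $z$.
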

Note that the requirement $\|z\|_2 = 1$ can be relaxed to $z \neq 0$
by letting $\hat{z}= \frac{z}{\|z\|_2}$ and $\hat{\rho}= \rho \|z\|_2^2$.

\subsection{R\'{e}nyi Entropy Power}
\label{subsection:Pre - Renyi entropy}
We present some properties of the differential R\'{e}nyi entropy and
R\'{e}nyi entropy power which are useful in this paper.
\begin{itemize}
\item In view of \eqref{eq: Renyi entropy} and \eqref{eq: Renyi entropy power},
for $\alpha \in (0,1) \cup (1, \infty)$,
\begin{align} \label{eq:Pre - Renyi entropy power Norm dfn}
N_{\alpha}(X)= \left( \|f_X\|_\alpha \right)^{-\frac{2\alpha'}{d}}.
\end{align}
\item
The differential R\'{e}nyi entropy $h_{\alpha}(X)$ is monotonically
non-increasing in $\alpha$, and so is $N_\alpha(X)$.
\item
If $Y=AX+b$ where $A \in \Reals^{d \times d}, |A| \neq 0$, $b \in  \Reals^{d}$,
then for all $\alpha \in [0,\infty]$
\begin{align}
& h_\alpha(Y) = h_\alpha(X) + \log|A|, \\
& N_\alpha(Y) = |A|^{\frac2d} \, N_{\alpha}(X).
\end{align}
This implies that the R\'{e}nyi entropy power is a homogeneous functional
of order~2 and it is translation invariant, i.e.,
\begin{align} \label{eq: REP}
& N_{\alpha}(\lambda X) =\lambda^2 \, N_{\alpha}(X),
\quad \forall \, \lambda \in \Reals,\\
\label{eq2: REP}
& N_{\alpha}(X+b) = N_{\alpha}(X), \quad \forall \, b \in \Reals^d.
\end{align}
\end{itemize}
In view of \eqref{eq: REP} and \eqref{eq2: REP}, $N_{\alpha}(X)$ has some
similar properties to the variance of $X$. However, if we
consider a sum of independent random vectors then
$\text{Var} \left( \sum_{k=1}^n X_k\right)=\sum_{k=1}^n \text{Var}(X_k)$
whereas the R\'{e}nyi entropy power of a sum of independent random vectors
is not equal, in general, to the sum of the R\'{e}nyi entropy powers of the
individual random vectors (unless these independent vectors are Gaussian
with proportional covariances).

The continuation of this paper considers R-EPIs for orders $\alpha \in (1, \infty]$.
The case where $\alpha=1$ refers to the EPI by Shannon \cite[Theorem~15]{Shannon}.

\section{A New R\'{e}nyi EPI}
\label{section: A New Renyi EPI}
In the following, a new R-EPI is derived. This inequality, which is expressed
in closed-form, is tighter than the R-EPI in \cite[Theorem~I.1]{BobkovC15}.
\begin{theorem} \label{theorem: REPI1}
Let $\{X_k\}_{k=1}^n$ be independent random vectors with densities defined on
$\Reals^d$, and let $n \in \naturals$, $\alpha >1$, $\alpha' = \frac{\alpha}{\alpha-1}$
and $S_n = \sum_{k=1}^n X_k$. Then, the following R-EPI holds:
\begin{align} \label{eq: REPI1}
N_{\alpha}(S_n) \geq c_\alpha^{(n)} \sum_{k=1}^n N_{\alpha}(X_k)
\end{align}
with
\begin{align} \label{eq: c for R-EPI1}
c_\alpha^{(n)} =\alpha^{\frac1{\alpha-1}} \left(1-\frac1{n\alpha'} \right)^{n\alpha'-1}.
\end{align}
Furthermore, the R-EPI in \eqref{eq: REPI1} has the following properties:
\begin{enumerate}
\item Eq.~\eqref{eq: REPI1} improves the R-EPI in \cite[Theorem~I.1]{BobkovC15} for
every $\alpha > 1$ and $n \in \naturals$,
\item For all $\alpha > 1$, it asymptotically coincides with the R-EPI in
\cite[Theorem~I.1]{BobkovC15} as $n \to \infty$,
\item In the other limiting case where $\alpha \downarrow 1$, it coincides with the
EPI (similarly to \cite{BobkovC15}),
\item If $n=2$ and $\alpha \to \infty$, the constant $c_\alpha^{(n)}$ in
\eqref{eq: c for R-EPI1} tends to $\tfrac12$ which is optimal;
this constant is achieved when $X_1$ and $X_2$ are independent random vectors
which are uniformly distributed in the cube $[0,1]^d$.
\end{enumerate}
\end{theorem}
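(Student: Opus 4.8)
The plan is to bound $N_\alpha(S_n)$ from below by applying the extended sharpened Young's inequality to the convolution $f_{S_n}=f_1\ast\cdots\ast f_n$ (where $f_k$ is the density of $X_k$), and then to turn the resulting product bound into the desired arithmetic sum by a careful choice of the free exponents. Concretely, I would start from the identity $N_\alpha(S_n)=\|f_{S_n}\|_\alpha^{-2\alpha'/d}$ in \eqref{eq:Pre - Renyi entropy power Norm dfn}, fix exponents $\nu_1,\dots,\nu_n\in(1,\alpha)$ meeting the constraint $\sum_{k=1}^n\frac1{\nu_k'}=\frac1{\alpha'}$ of Corollary~\ref{corollary: Young's ineq.}, and apply \eqref{eq: Young's ineq. 3} to obtain $\|f_{S_n}\|_\alpha\le\bigl(\tfrac1{A_\alpha}\prod_k A_{\nu_k}\bigr)^{d/2}\prod_k\|f_k\|_{\nu_k}$. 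Raising this to the negative power $-2\alpha'/d$ (which reverses the inequality) and inserting Corollary~\ref{corollary:Pre - Holder for densities} in the form $\|f_k\|_{\nu_k}\le\|f_k\|_\alpha^{\alpha'/\nu_k'}$ (valid since $\nu_k<\alpha$) would yield the weighted product bound
\begin{align}
N_\alpha(S_n)\ge B(\theta)\prod_{k=1}^n N_\alpha(X_k)^{\theta_k},\qquad \theta_k:=\frac{\alpha'}{\nu_k'},\quad \sum_{k=1}^n\theta_k=1,
\end{align}
with $B(\theta)=\bigl(\tfrac1{A_\alpha}\prod_k A_{\nu_k}\bigr)^{-\alpha'}$ and each $\theta_k\in(0,1)$.

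The decisive step is the choice of weights. Since the $\{\nu_k\}$, equivalently the $\{\theta_k\}$, are free, I would set $\theta_k=N_\alpha(X_k)/s$ with $s:=\sum_{j}N_\alpha(X_j)$; this is admissible, as it lies in $(0,1)^n$ and satisfies $\sum_k\theta_k=1$. With this choice the ratios $N_\alpha(X_k)/\theta_k=s$ are all equal, so the weighted AM--GM inequality holds with equality, $\prod_k N_\alpha(X_k)^{\theta_k}=s\,\prod_k\theta_k^{\theta_k}$. The target inequality $N_\alpha(S_n)\ge c_\alpha^{(n)}\,s$ then reduces to the single scalar inequality $B(\theta)\prod_k\theta_k^{\theta_k}\ge c_\alpha^{(n)}$.

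It remains to verify this scalar inequality, and here I expect the computation to collapse pleasantly. Writing $\nu_k'=\alpha'/\theta_k$ and expanding each $\log A_{\nu_k}$ through \eqref{eq:Pre - At}, one gets $\log B(\theta)+\sum_k\theta_k\log\theta_k=\alpha'\log A_\alpha+\sum_k\varphi(\theta_k)$, where the added $\theta_k\log\theta_k$ contributions cancel the matching terms arising from $\log A_{\nu_k}$ and leave
\begin{align}
\varphi(u)=(\alpha'-u)\log(\alpha'-u)-(\alpha'-2u)\log\alpha'.
\end{align}
A direct check using \eqref{eq:Pre - At} shows that the stated constant obeys $\log c_\alpha^{(n)}=\alpha'\log A_\alpha+n\,\varphi(1/n)$ (equivalently, $c_\alpha^{(n)}$ equals the symmetric Young constant divided by $n$), so the scalar inequality is exactly $\sum_k\varphi(\theta_k)\ge n\,\varphi(1/n)$. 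Since $\varphi''(u)=1/(\alpha'-u)>0$ on $(0,1)\subset(0,\alpha')$, the function $\varphi$ is convex, and Jensen's inequality applied with barycenter $\tfrac1n\sum_k\theta_k=\tfrac1n$ finishes the proof.

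The main obstacle, and the place where the argument must be clever rather than routine, is the passage from the product bound to the arithmetic sum: a symmetric choice $\theta_k\equiv 1/n$ only produces a geometric-mean bound, which is strictly weaker than \eqref{eq: REPI1}. The weights $\theta_k\propto N_\alpha(X_k)$ are precisely what force the weighted AM--GM step to be tight, and the ensuing cancellation that exposes the convex function $\varphi$ is the analytic heart of the matter. Finally, the four listed properties would follow from elementary asymptotics: item~1 from the strict monotone decrease of $m\mapsto(1-\tfrac1m)^{m-1}$ towards $e^{-1}$ (here $m=n\alpha'>1$, and $\tfrac{d}{dm}\log(1-\tfrac1m)^{m-1}=\log(1-\tfrac1m)+\tfrac1m<0$); item~2 from its limit $e^{-1}$ as $n\to\infty$; item~3 from $\alpha^{1/(\alpha-1)}\to e$ together with $(1-\tfrac1{n\alpha'})^{n\alpha'-1}\to e^{-1}$ as $\alpha\downarrow1$; and item~4 by evaluating the $n=2$, $\alpha\to\infty$ limit and checking that independent vectors uniform on $[0,1]^d$ give $N_\infty(S_2)=1=\tfrac12\sum_k N_\infty(X_k)$, matching Rogozin's sharp constant.
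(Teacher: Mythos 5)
Your core argument is correct, and it is essentially the paper's own proof in different clothing: the product bound obtained from the extended sharpened Young's inequality together with Corollary~\ref{corollary:Pre - Holder for densities} is exactly \eqref{eq: R-EPI BC15}; your weight choice $\theta_k = N_\alpha(X_k)/s$ is the paper's sub-optimal choice $\hat{t}_k = N_\alpha(X_k)$ after the normalization \eqref{eq: normalized REP} (your observation that weighted AM--GM is tight for this choice is precisely the statement that $D(\underline{t}\|\underline{N}_\alpha)=0$ there); and your Jensen step at barycenter $\tfrac1n$ is the paper's tangent-line bound \eqref{eq1: convex f}--\eqref{LB - third term of BC}, since your $\varphi$ and $\alpha'$ times the function $f$ in \eqref{eq: f} differ only by an affine function of the argument, so convexity of one is convexity of the other and both Jensen and the supporting-line argument give the same conclusion. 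Your verifications of items 1)--4) also match the paper's (your monotonicity computation for $m \mapsto \left(1-\tfrac1m\right)^{m-1}$ is in fact slightly more explicit than the paper's one-line claim for item 1).

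There is, however, one genuine gap: you implicitly assume $N_\alpha(X_k)>0$ for every $k$. The theorem only assumes that the $X_k$ have densities, so some $f_{X_k}$ may fail to lie in $L^\alpha(\Reals^d)$, i.e.\ $N_\alpha(X_k)=0$ is possible. In that case your weights are inadmissible (a zero weight corresponds to $\nu_k = 1$, which is excluded from Young's inequality), and for any admissible choice of weights the product bound collapses to $N_\alpha(S_n)\ge 0$, which does not imply \eqref{eq: REPI1} when the remaining $N_\alpha(X_j)$ are positive, since the right side of \eqref{eq: REPI1} is then strictly positive. The paper closes this case by a separate argument: with $\mathcal{K}_0$ the set of indices where $N_\alpha(X_k)=0$, it lower bounds $h_\alpha(S_n)$ by the conditional R\'enyi entropy given $\{X_k\}_{k\in\mathcal{K}_0}$ (Arimoto's definition, using the monotonicity property from \cite{FehrB14}), which by independence reduces the problem to the sum over $k\notin\mathcal{K}_0$, and then invokes $c_\alpha^{(l)}\ge c_\alpha^{(n)}$ for $l = n - |\mathcal{K}_0| \le n$, i.e.\ the monotonicity of $c_\alpha^{(n)}$ in $n$ --- which, incidentally, is exactly what your item-1 computation establishes. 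You would need to append such a conditioning (or truncation/limiting) argument to make the proof complete.
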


\begin{proof}
In the first stage of this proof, we assume that
\begin{align} \label{eq: positive N_k}
N_\alpha(X_k) > 0, \quad k \in \{1, \ldots, n\}
\end{align}
which, in view of \eqref{eq:Pre - Renyi entropy power Norm dfn},
implies that $f_{X_k} \in L^\alpha(\Reals^d)$, where $f_{X_k}$
is the density of $X_k$ for all $k \in \{1,\ldots,n\}$.
In \cite[(12)]{BobkovC15} it is shown that for $\alpha>1$,
\begin{align} \label{eq: R-EPI BC15}
N_{\alpha}(S_n) \geq B \prod_{k=1}^n N_{\alpha}^{t_k}(X_k)
\end{align}
with
\begin{align} \label{eq1: BC1}
& B = \bigl( A_{\nu_1} \ldots A_{\nu_n} A_{\alpha'} \bigr)^{-\alpha'}, \\
\label{eq2: BC1}
& \nu_k > 1, \quad \forall \, k \in \{1, \ldots, n\}, \\
\label{eq3: BC1}
& \nu' = \frac{\nu}{\nu-1}, \quad \forall \, \nu \in \Reals, \\
\label{eq4: BC1}
& \sum_{k=1}^n \frac1{\nu'_k} = \frac1{\alpha'}, \\
\label{eq5: BC1}
& t_k = \frac{\alpha'}{\nu'_k}, \quad \forall \, k \in \{1, \ldots, n\}.
\end{align}
Consequently, \eqref{eq2: BC1}--\eqref{eq5: BC1} yields
\begin{align}  \label{eq1: BC2}
& t_k > 0, \quad \forall \, k \in \{1, \ldots, n\}, \\
\label{eq2: BC2}
& \sum_{k=1}^n t_k = 1.
\end{align}
The proof of \eqref{eq: R-EPI BC15},
which relies on Corollaries~\ref{corollary:Pre - Holder for densities}
and~\ref{corollary: Young's ineq.}, is introduced in Appendix~\ref{appendix: BC15}.

Similarly to \cite[(14)]{BobkovC15}, in view of the homogeneity
of the entropy power functional (see \eqref{eq: REP}), it can be
assumed without any loss of generality that
\begin{align} \label{eq: normalized REP}
\sum_{k=1}^n N_{\alpha}(X_k)=1.
\end{align}
Hence, to prove \eqref{eq: REPI1}, it is sufficient to show that
under the assumption in \eqref{eq: normalized REP}
\begin{align} \label{eq: equiv. REPI1}
N_{\alpha}(S_n) \geq c_\alpha^{(n)}.
\end{align}

From this point, we deviate from the proof of
\cite[Theorem~I.1]{BobkovC15}. Taking logarithms
on both sides of \eqref{eq: R-EPI BC15} and assembling
\eqref{eq:Pre - At}, \eqref{eq1: BC1}--\eqref{eq2: BC2}
and \eqref{eq: normalized REP} yield
\begin{align} \label{key inequality}
\log N_{\alpha}(S_n) \geq  f_0(\underline{t}),
\end{align}
where $\underline{t}=(t_1,\ldots,t_n )$, and
\begin{align} \label{eq: f0}
& f_0(\underline{t}) = \frac{\log \alpha}{\alpha-1}-D(\underline{t}\|
\underline{N}_\alpha) + \alpha' \sum_{k=1}^n\left( 1-\frac{t_k}{\alpha'}
\right) \log \left( 1-\frac{t_k}{\alpha'} \right), \\
\label{eq: REP vector}
& \underline{N}_\alpha = \left(N_\alpha(X_1),\ldots,N_\alpha(X_n)\right), \\
& D(\underline{t}\|\underline{N}_\alpha)
=\sum_{k=1}^n t_k \log \left(\frac{t_k}{N_\alpha(X_k)}\right).
\end{align}
In view of \eqref{eq1: BC2} and \eqref{eq2: BC2},
the bound in \eqref{key inequality} holds for every
$\underline{t} \in \Reals_+^n$ such that $\sum_{k=1}^n t_k=1$.
Consequently, the R-EPI in \cite[Theorem~I.1]{BobkovC15} can be tightened
by maximizing the right side of \eqref{key inequality}, leading to
the following optimization problem:
\begin{align} \label{optimize f_0}
\begin{array}{ll}
\text{maximize} &  f_0(\underline{t}) \\
\text{subject to}  & t_k \geq 0 , \quad k \in \{1, \ldots, n\}, \\
                   & \sum_{k=1}^n t_k=1.
\end{array}
\end{align}
Note that the convexity of the function
\begin{align} \label{eq: f}
f(x) = \left(1-\frac{x}{\alpha'}\right) \log \left(1- \frac{x}{\alpha'}\right),
\quad x \in [0, \alpha']
\end{align}
yields that the third term on the right side of \eqref{eq: f0}
is convex in $\underline{t}$. Since the relative entropy
$D(\underline{t}\|\underline{N_\alpha})$ is also convex in
$\underline{t}$, the objective function $f_0$ in \eqref{eq: f0}
is expressed as a difference of two convex functions in $\underline{t}$.
In order to get an analytical closed-form lower bound on the solution
of the optimization problem in \eqref{optimize f_0}, we take the
sub-optimal choice $\underline{t} = \underline{N}_{\alpha}$ (similarly
to the proof \cite[Theorem~I.1]{BobkovC15}) which yields that
$D(\underline{t}\|\underline{N}_\alpha)=0$; however, our proof derives
an improved lower bound on the third term of $f_0(\underline{t})$
which needs to be independent of $\underline{N}_\alpha$. Let
\begin{align} \label{eq: suboptimal t}
\hat{t}_k = N_\alpha(X_k),\quad 1 \leq k \leq n,
\end{align}
then, in view of \eqref{key inequality} and \eqref{eq: suboptimal t},
\begin{align} \label{eq1: sub_optimal REPI}
\log N_\alpha(S_n) & \geq f_0(\underline{\hat{t}})  \\
\label{eq2: sub_optimal REPI}
& = \frac{\log \alpha}{\alpha-1}  +
\alpha' \sum_{k=1}^n\left( 1-\frac{\hat{t}_k}{\alpha'} \right)
\log \left( 1-\frac{\hat{t}_k}{\alpha'} \right).
\end{align}
Due to the convexity of $f$ in \eqref{eq: f}, for all $k \in \{1, \ldots, n\}$,
\begin{align} \label{eq1: convex f}
f(\hat{t}_k) \geq f(x) + f'(x) \, (\hat{t}_k - x).
\end{align}
Choosing $x = \frac1{n}$ in the right side of \eqref{eq1: convex f} yields
\begin{align} \label{eq2: convex f}
\left( 1-\frac{\hat{t}_k}{\alpha'} \right) \log \left( 1-\frac{\hat{t}_k}{\alpha'}
\right) \geq \log \left( 1-\frac1{n\alpha'} \right) + \frac{\log e}{n\alpha'} -
\frac{\hat{t}_k}{\alpha'} \left[ \log e+\log \left( 1-\frac1{n\alpha'} \right)  \right]
\end{align}
and, in view of \eqref{eq: normalized REP} and \eqref{eq: suboptimal t} which
yields $\sum_{k=1}^n \hat{t}_k = 1$, summing over $k \in \{1, \ldots, n\}$ on
both sides of \eqref{eq2: convex f} implies that
\begin{align} \label{LB - third term of BC}
\alpha' \sum_{k=1}^n \left( 1-\frac{\hat{t}_k}{\alpha'} \right)
\log \left( 1-\frac{\hat{t}_k}{\alpha'} \right)
\geq  (n\alpha'-1) \log \left( 1-\frac1{n\alpha'} \right).
\end{align}
Finally, assembling \eqref{eq1: sub_optimal REPI},
\eqref{eq2: sub_optimal REPI} and \eqref{LB - third term of BC} yields
\eqref{eq: equiv. REPI1} with $c_{\alpha}^{(n)}$ in \eqref{eq: c for R-EPI1}
as required.

In the sequel, we no longer assume that condition \eqref{eq: positive N_k}
holds. Define
\begin{align} \label{eq:K_0}
\mathcal{K}_0=\{ k \in \{1, \ldots, n\} \colon N_\alpha(X_k)=0 \},
\end{align}
and note that
\begin{align}
\label{eq:conditinal renyi 1}
h_\alpha(S_n)
&= h_\alpha \left(\sum_{k \notin \mathcal{K}_0}X_k +
\sum_{k \in \mathcal{K}_0}X_k \right) \\[0.2 cm]
\label{eq:conditinal renyi 2}
& \geq h_\alpha \left(\sum_{k \notin \mathcal{K}_0}X_k +
\sum_{k \in \mathcal{K}_0}X_k  \, \Big| \, \{X_k\}_{k \in \mathcal{K}_0}\right) \\[0.2 cm]
\label{eq:conditinal renyi 3}
& = h_\alpha \left(\sum_{k \notin \mathcal{K}_0}X_k\right)
\end{align}
where the conditional R\'{e}nyi entropy is defined according to Arimoto's proposal
in \cite{Arimoto75} (see also \cite[Section~4]{FehrB14}), \eqref{eq:conditinal renyi 2} is due
to the monotonicity property of the conditional R\'{e}nyi entropy (see \cite[Theorem~2]{FehrB14}),
and \eqref{eq:conditinal renyi 3} is due to the independence of $X_1, \ldots, X_n$. Since $N_\alpha(X_k) > 0$
for every $k \notin \mathcal{K}_0$, then from the previous analysis
\begin{align} \label{eq:Th1 K_0^c}
N_\alpha \left(\sum_{k \notin \mathcal{K}_0}X_k\right) \geq c_\alpha^{(l)}
\sum_{k \notin \mathcal{K}_0} N_\alpha(X_k),
\end{align}
where $l=n-|\mathcal{K}_0|$. In view of \eqref{eq: c for R-EPI1},
it can be verified that $c_\alpha^{(n)}$ is monotonically decreasing
in $n$; hence, \eqref{eq:conditinal renyi 3}, \eqref{eq:Th1 K_0^c}
and $c_\alpha^{(l)} \geq c_\alpha^{(n)}$ yield
\begin{align} \label{eq:prrof Th1}
N_\alpha(S_n) \geq c_\alpha^{(n)} \sum_{k=1}^n N_\alpha(X_k).
\end{align}

We now turn to prove Items 1)--4).
\begin{itemize}
\item To prove Item~1), note that \eqref{eq: c for BC}
and \eqref{eq: c for R-EPI1} yield that $c_{\alpha}^{(n)} > c_{\alpha}$ for
all $\alpha > 1$ and $n \in \naturals$.
\item Item~2) holds since from \eqref{eq: c for R-EPI1}
\begin{align} \label{eq: limit of tightened BC const.}
\lim_{n \to \infty} c_\alpha^{(n)} = \tfrac1{e} \, \alpha^{\frac1{1-\alpha}}
\end{align}
where the right side of \eqref{eq: limit of tightened BC const.} coincides
with the constant $c_\alpha$ in \cite[(3)]{BobkovC15} (see \eqref{eq: c for BC}).

\item Item 3) holds since $\alpha \downarrow 1$ yields $\alpha' \to \infty$, which
implies that for every $n \in \naturals$
\begin{align} \label{eq: c at alpha=1}
\lim_{\alpha \downarrow 1} c_\alpha^{(n)} = \lim_{\alpha \downarrow 1} c_{\alpha} = 1.
\end{align}
Hence, by letting $\alpha$ tend to~1, \eqref{eq: REPI1} and
\eqref{eq: c at alpha=1} yield the EPI in \eqref{eq: EPI}.

\item To prove Item~4), note that from \eqref{eq: c for R-EPI1}
\begin{align}
\lim_{\alpha \to \infty} c_\alpha^{(n)} = \left(1 - \frac1{n} \right)^{n-1}
\end{align}
which is monotonically decreasing in $n$ for $n \geq 2$, being equal to
$\tfrac12$ for $n=2$ and $\tfrac1e$ by letting $n$ tend to $\infty$.
Let $X$ be a $d$-dimensional random vector with density $f_X$, and let
\begin{align} \label{eq: MX}
M(X):= \esssup(f_X).
\end{align}
From \eqref{RE at infinity}, \eqref{eq: Renyi entropy power} and
\eqref{eq: MX}, it follows that
\begin{align}
N_{\infty}(X) & := \lim_{\alpha \to \infty} N_{\alpha}(X) \\
\label{eq: N at infinity}
& = M^{-\frac{2}{d}}(X).
\end{align}
By assembling \eqref{eq: REPI1} and \eqref{eq: N at infinity},
it follows that if $X_1, \ldots, X_n$ are independent $d$-dimensional
random vectors with densities then
\begin{align} \label{eq: M-inequality}
M^{-\frac{2}{d}}(S_n) \geq \left(1-\frac{1}{n}\right)^{n-1}
\sum_{k=1}^n M^{-\frac{2}{d}}(X_k).
\end{align}
\end{itemize}
This improves the tightness of the inequality in \cite[Theorem~1]{BobkovC14}
where the coefficient $\left(1-\frac{1}{n}\right)^{n-1}$ on the right side of
\eqref{eq: M-inequality} has been loosened to $\tfrac1e$ (note, however, that
they coincide when $n \to \infty$). For $n=2$, the coefficient $\tfrac12$ on
the right side of \eqref{eq: M-inequality} is tight, and it is achieved when
$X_1$ and $X_2$ are independent random vectors which are uniformly distributed
in the cube $[0,1]^d$ \cite[p.~103]{BobkovC14}.
\end{proof}
\begin{figure}[here!]
\begin{center}
\epsfig{file=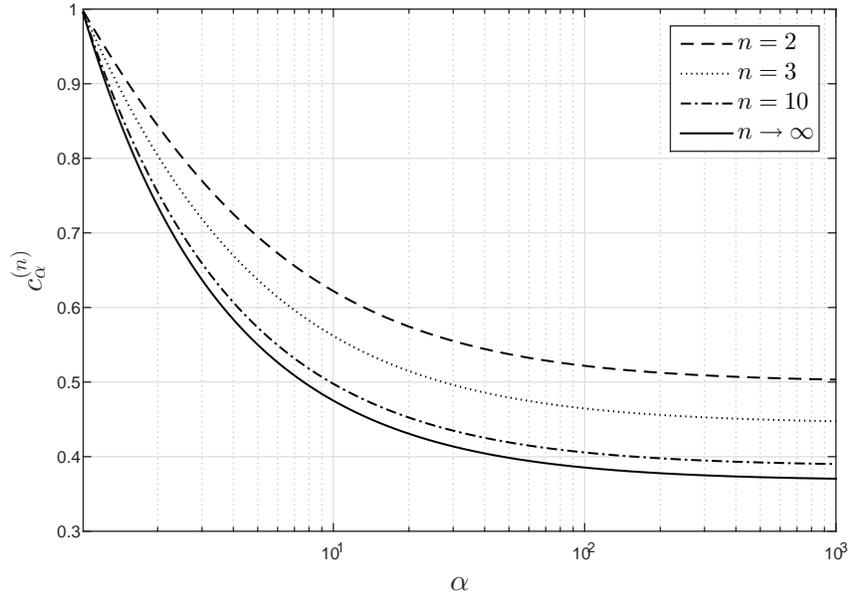,scale=0.65}
\caption{\label{Figure: R-EPI1}
A plot of $c_\alpha^{(n)}$ in \eqref{eq: c for R-EPI1}, as a function
of $\alpha$, for $n=2,3,10$ and $n \to \infty$.}
\end{center}
\end{figure}
Figure~\ref{Figure: R-EPI1} plots $c_\alpha^{(n)}$ as a function
of $\alpha$, for some values of $n$, verifying numerically Items~1)--4)
in Theorem~\ref{theorem: REPI1}.
In \cite[Theorem~I.1]{BobkovC15},
$c_\alpha^{(n)}$ is independent of $n$, and it is equal to $c_\alpha$ in
\eqref{Intro: R-EPI} which is the limit of $c_\alpha^{(n)}$
in \eqref{eq: c for R-EPI1} by letting $n \to \infty$ (the solid curve in
Figure~\ref{Figure: R-EPI1}).

\begin{remark}
For independent random variables $\{X_k\}_{k=1}^n$ with densities on $\Reals$,
the result in \eqref{eq: M-inequality} with $d=1$ can be strengthened to (see
\cite[p.~105]{BobkovC14} and \cite{Rogozin})
\begin{align}  \label{eq2: M-inequality}
\frac1{M^2(S_n)} \geq \tfrac12 \sum_{k=1}^n \frac1{M^2(X_k)}
\end{align}
where $S_n := \sum_{k=1}^n X_k$. Note that \eqref{eq: M-inequality}
and \eqref{eq2: M-inequality} coincide if $n=2$ and $d=1$.
\end{remark}

\begin{example}
Let $X$ and $Y$ be $d$-dimensional random vectors with densities $f_X$ and $f_Y$,
respectively, and assume that the entries of $X$ are i.i.d. as well as those of
$Y$. Let $X_1$, $X_2$, $Y_1$, $Y_2$ be independent $d$-dimensional random vectors
where $X_1, X_2$ are independent copies of $X$, and $Y_1, Y_2$ are independent
copies of $Y$. Assume that
\begin{align}
\label{eq: Pr X1=X2 Y1=Y2}
\begin{split}
& \Pr[X_{1,k} = X_{2,k}] = \alpha, \\
& \Pr[Y_{1,k} = Y_{2,k}] = \beta
\end{split}
\end{align}
for all $k \in \{1, \ldots, n\}$. We wish to obtain an upper bound on the
probability that $X_1+Y_1$ and $X_2+Y_2$ are equal. From \eqref{eq: Renyi entropy},
\eqref{eq: Renyi entropy power} (with $\alpha=2$), and \eqref{eq: Pr X1=X2 Y1=Y2}
\begin{align}
& N_2(X) = \exp\left(\tfrac2d \, h_2(X)\right)  \\
& \hspace*{1.1cm} = \left( \int_{\Reals^d} f_X^2(\underline{x}) \, \mathrm{d}\underline{x} \right)^{-\tfrac2d} \\
& \hspace*{1.1cm} = \probability^{-\frac2d}[X_1 = X_2] \\
& \hspace*{1.1cm} = \prod_{k=1}^d \probability^{-\frac2d}[X_{1,k} = X_{2,k}] \\
& \hspace*{1.1cm} = \alpha^{-2}, \label{N2 X} \\[0.1cm]
& N_2(Y) = \beta^{-2}, \label{N2 Y} \\
\label{N2 X+Y}
& N_2(X+Y) = \probability^{-\frac2d}[X_1+Y_1 = X_2+Y_2].
\end{align}
Assembling \eqref{eq: REPI1} with $n = \alpha = 2$, \eqref{N2 X}, \eqref{N2 Y} and \eqref{N2 X+Y} yield
\begin{align} \label{eq: Pr X+Y}
\probability(X_1+Y_1 = X_2+Y_2) \leq
\left( \tfrac{27}{32} \, \bigl(\alpha^{-2} + \beta^{-2}\bigr) \right)^{-\frac{d}{2}}.
\end{align}
The factor $\tfrac{27}{32}$ on the base of the exponent on the right side of \eqref{eq: Pr X+Y},
instead of the looser factor $c_2 = \tfrac2e$ which follows from \eqref{eq: c for BC} with $\alpha=2$
(see \cite[Theorem~I.1]{BobkovC15}), improves the exponential decay rate of the upper bound in
\eqref{eq: Pr X+Y} as a function of the dimension $d$. The optimal bound
has to be with a coefficient of $\bigl(\alpha^{-2} + \beta^{-2}\bigr)$ on the base of the exponent
in the right side of \eqref{eq: Pr X+Y} which is less than or equal to~1; this can be verified since
if $X$ and $Y$ are independent Gaussian random variables, then
\begin{align} \label{eq:N2 X+Y Gaussians}
N_2(X+Y)=N_2(X)+N_2(Y),
\end{align}
so,
\begin{align} \label{eq:Pr X+Y Gaussians}
\probability(X_1+Y_1 = X_2+Y_2)= \bigl( \alpha^{-2} + \beta^{-2} \bigr) ^{-\tfrac{d}{2}}.
\end{align}
This provides a reference for comparing the exponential decay which is implied by
$c_2$ in \eqref{eq: c for BC}, $c_2^{(2)}$ in \eqref{eq: REPI1}, and the case
where $X$ and $Y$ are independent Gaussian random variables:
\begin{align} \label{eq:comparing}
\frac{2}{e} < \frac{27}{32} < 1.
\end{align}
\end{example}

\section{A Further tightening of the R-EPI}
\label{section: further tightening of R-EPI}
\subsection{A Tightened R-EPI for $n \geq 2$}
\label{subsection: REPI2 - general n}
In the following, we wish to tighten the R-EPI in Theorem~\ref{theorem: REPI1}
for an arbitrary $n \geq 2$. It is first demonstrated that a reduction of the
optimization problem in \eqref{optimize f_0} to $n-1$ variables (recall that
$\sum_{k=1}^n t_k = 1$) leads to a convex optimization problem. This convexity
result is established by a non-trivial use of Fact~\ref{fact: Bunch} in
Section~\ref{subsection:Pre - Rank-One Modification} (see \cite{Bunch}),
and it is also shown that the reduction of the optimization problem in \eqref{optimize f_0}
from $n$ to $n-1$ variables is essential for its convexity. Consequently, the
convex optimization problem is handled by solving the corresponding Karush-Kuhn-Tucker
(KKT) equations. If $n=2$, their solution leads to a closed-form expression which yields the
R-EPI in Corollary~\ref{proposition: improved bound n=2}. For $n>2$, no solution is provided
in closed form; nevertheless, an efficient algorithm is introduced for solving the KKT
equations for an arbitrary $n>2$, and the improvement in the tightness of the new R-EPI is
exemplified numerically in comparison to the bounds in \cite{BercherVignat}, \cite{BobkovC15}
and Theorem~\ref{theorem: REPI1}.

\subsubsection{The optimization problem in \eqref{optimize f_0}}
\label{subsection: optimization problem with f_0}
In view of \eqref{eq: f0}--\eqref{optimize f_0}, the maximization problem in
\eqref{optimize f_0} can be expressed in the form
\begin{align} \label{opt. prob. with n variables}
\begin{array}{ll}
\text{maximize} & f_0(\underline{t}) = \sum_{k=1}^n g(t_k) + \sum_{k=1}^n t_k \log N_k
+ \frac{\log \alpha}{\alpha-1} \\[0.1 cm]
\text{subject to}  & \underline{t} \in \mathcal{P}^n
\end{array}
\end{align}
where
\begin{align} \label{eq: g}
& g(x)=(\alpha'-x)\log \left( 1-\frac{x}{\alpha'} \right) - x \log x, \quad x \in [0,1]\\
\label{eq: N_alpha entries}
& N_k = N_{\alpha}(X_k), \quad k \in \{1, \ldots, n\}
\end{align}
(for simplicity of notation, the dependence of $g$ and $N_k$ in $\alpha$ has been suppressed in
\eqref{opt. prob. with n variables}), and $\mathcal{P}^n$ is the probability simplex
\begin{align} \label{eq: n-dim simplex}
\mathcal{P}^n = \left\{\underline{t} \in \Reals^n \colon t_k \geq 0,\, \sum_{k=1}^n t_k = 1\right\}.
\end{align}
The term $ \sum_{k=1}^n t_k \log N_k$ on the right side of \eqref{opt. prob. with n variables}
is linear in $\underline{t}$, thus the concavity of $f_0$ in $\underline{t}$ is only affected by the
term $\sum_{k=1}^n g(t_k)$. Since $g''(x) = \frac{2x-\alpha'}{x(\alpha'-x)}$ where $x \in [0,1]$,
if $\alpha' \geq 2$, then $g$ is concave on the interval $[0,1]$. If $\alpha' \in (1,2)$ (i.e.,
if $\alpha \in (2, \infty)$) then $g$ is not concave on the interval $[0,1]$; it is only concave on
$[0, \tfrac{\alpha'}{2}]$, and it is convex on $[\tfrac{\alpha'}{2}, 1]$. Hence, as a maximization
problem over the variables $t_1, \ldots, t_n$, the objective function $f_0$ in
\eqref{opt. prob. with n variables} is not concave if $\alpha > 2$.

\subsubsection{A reduction of the optimization problem in \eqref{optimize f_0} to $n-1$ variables}
In view of \eqref{eq: n-dim simplex}, the substitution
\begin{align}
\label{eq: tn}
t_n = 1 - \sum_{k=1}^{n-1} t_k
\end{align}
transforms the maximization problem in \eqref{opt. prob. with n variables} to the following
equivalent problem:
\begin{align} \label{opt. prob. with n-1 variables}
\begin{array}{ll}
\text{maximize} & f(t_1,\ldots,t_{n-1})\\[0.1cm]
\text{subject to}  & \underline{t} \in \mathcal{D}^{n-1}
\end{array}
\end{align}
where
\begin{align} \label{eq: f function}
f(t_1,\ldots,t_{n-1}) = f_0 \left(t_1, \ldots, t_{n-1}, 1-\sum_{k=1}^{n-1} t_k \right)
\end{align}
and $\mathcal{D}^{n-1}$ is the polyhedron
\begin{align} \label{eq: polyhedron}
\mathcal{D}^{n-1} = \left\{(t_1, \ldots,t_{n-1} ) \colon t_k \geq 0,
\; \sum_{k=1}^{n-1}t_k \leq 1 \right\}.
\end{align}

\subsubsection{Proving the convexity of the optimization problem in
\eqref{opt. prob. with n-1 variables}}
We wish to show that the objective function $f$ of the optimization problem
in \eqref{opt. prob. with n-1 variables} is concave, i.e., it is required to assert
that all the eigenvalues of the Hessian matrix $\nabla^2 f$ are non-positive.

Eqs. \eqref{opt. prob. with n variables} and \eqref{eq: f function} yield
\begin{align} \label{eq: rewriting f}
\begin{split}
& f(t_1,\ldots,t_{n-1}) \\
& = \sum_{k=1}^{n-1} g(t_k) + g\left(1-\sum_{k=1}^{n-1} t_k\right) \\
& \hspace*{0.3cm} + \sum_{k=1}^{n-1} t_k \log N_k +
\left(1-\sum_{k=1}^{n-1} t_k\right) \log N_n
+ \frac{\log \alpha}{\alpha-1}.
\end{split}
\end{align}
Let
\begin{align} \label{eq: q function}
q(x) = g''(x) = \frac{2x-\alpha'}{x(\alpha'-x)}, \quad x \in [0,1]
\end{align}
then, in view of \eqref{eq: rewriting f} and \eqref{eq: q function},
for all $(t_1, \ldots, t_{n-1}) \in \mathcal{D}^{n-1}$
\begin{align} \label{eq1: Hessian of f}
\nabla^2 f(t_1,\ldots,t_{n-1}) &=
\begin{pmatrix}
  q(t_1) & 0 & \cdots & 0 \\
  0 & q(t_2) & \cdots & 0 \\
  \vdots  & \vdots  & \ddots & \vdots  \\
  0 & 0 & \cdots & q(t_{n-1})
\end{pmatrix} + q\left(1-\sum_{k=1}^{n-1}t_k\right)
\begin{pmatrix}
  1 & 1 & \cdots & 1 \\
  1 & 1 & \cdots & 1 \\
  \vdots  & \vdots  & \ddots & \vdots  \\
  1 & 1 & \cdots & 1
\end{pmatrix} \nonumber \\
&= D + \rho \, \underline{1} \, \underline{1}^T
\end{align}
where
\begin{align} \label{eq: D and rho}
\begin{split}
&D =  \text{diag}(q(t_1), \ldots, q(t_{n-1})), \\
&\rho = q\left(1-\sum_{k=1}^{n-1}t_k\right).
\end{split}
\end{align}
Recall that if $\alpha' \in [2,\infty)$ then $f_0(t_1,\ldots,t_n)$ is
concave in $\mathcal{P}^n$, hence, so is $f(t_1,\ldots,t_{n-1})$ in
$D^{n-1}$. We therefore need only to focus on the case where $\alpha' \in (1,2)$
(i.e., $\alpha \in (2, \infty)$).

\begin{proposition} \label{proposition: concavity of f}
For every $\alpha' \in (1,2)$, the function $f \colon \mathcal{D}^{n-1} \to \Reals$
in \eqref{eq: rewriting f} is concave.
\end{proposition}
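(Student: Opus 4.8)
The plan is to prove concavity through the second-order condition: at every interior point of $\mathcal{D}^{n-1}$ I would show that the Hessian $\nabla^2 f = D + \rho\,\underline{1}\,\underline{1}^T$ from \eqref{eq1: Hessian of f} is negative semidefinite, concavity then extending to the closed polyhedron $\mathcal{D}^{n-1}$ by continuity of $f$ (on the interior all $t_k\in(0,1)$ and all entries $q(t_k)$ are finite). The first thing to record is the sign structure. Since $q(t)=\frac{2t-\alpha'}{t(\alpha'-t)}$ from \eqref{eq: q function} is negative for $t<\tfrac{\alpha'}{2}$ and positive for $t>\tfrac{\alpha'}{2}$, and since $\tfrac{\alpha'}{2}>\tfrac12$ while $t_1+\dots+t_n=1$ with $t_n=1-\sum_{k<n}t_k$, at most one of the $n$ numbers $q(t_1),\dots,q(t_{n-1}),\rho=q(t_n)$ can be positive. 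If none is positive then $D\preceq 0$ and $\rho\le 0$, so $\nabla^2 f\preceq 0$ trivially; hence the only case to treat is that exactly one of them is positive. By the symmetry of $f_0$ under permutations of the coordinates it is convenient to phrase the condition $\nabla^2 f\preceq 0$ as the symmetric statement $\sum_{k=1}^n q(t_k)\,v_k^2\le 0$ for all $\underline{v}$ with $\sum_k v_k=0$ (the tangent space of the simplex $\mathcal{P}^n$).

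Next I would localize the spectrum of $C=D+\rho\,\underline{1}\,\underline{1}^T$ by a non-trivial use of Fact~\ref{fact: Bunch}. Because exactly one ``diagonal-type'' value is positive, the interlacing inequalities \eqref{eq: Bunch1}--\eqref{eq: Bunch2} already place every eigenvalue except the largest one below the largest nonpositive $d_i$, so only the top eigenvalue $\lambda_{\max}$ needs attention. By item~\ref{item3: Bunch}, $\lambda_{\max}$ is the relevant root of the secular function $W$, which is strictly monotone on the open interval (bounded by the two diagonal values straddling the sign change) that contains $0$. Consequently $\lambda_{\max}\le 0$ is equivalent to a sign condition on $W(0)=1+\rho\sum_{k<n}\frac{1}{q(t_k)}$, and a short computation shows that, whether the positive entry is $\rho$ or a diagonal entry, this sign condition reduces in both cases to the single scalar inequality $\sum_{k=1}^n \frac{1}{q(t_k)}\ge 0$, i.e. $\sum_{k=1}^n r(t_k)\ge 0$ with $r(t):=\tfrac{1}{q(t)}=\frac{t(\alpha'-t)}{2t-\alpha'}$. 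Any coincidences among the $q(t_k)$ (which would violate the hypotheses of item~\ref{item3: Bunch}) contribute only eigenvalues equal to one of the nonpositive $d_i$ and are harmless; alternatively the same scalar inequality follows directly from the quadratic form by Cauchy--Schwarz applied to $\bigl(\sum_{k<n}v_k\bigr)^2$, which sidesteps the distinctness requirement entirely.

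It remains to prove $\sum_{k=1}^n r(t_k)\ge 0$ under $\sum_k t_k=1$ with exactly one coordinate exceeding $\tfrac{\alpha'}{2}$. Writing $r(t)=\frac{(\alpha')^2}{4(2t-\alpha')}-\frac{t}{2}+\frac{\alpha'}{4}$, one finds $r''(t)=\frac{2(\alpha')^2}{(2t-\alpha')^3}<0$ for $t\in[0,\tfrac{\alpha'}{2})$, so $-r$ is convex there. Let $t_{k_0}>\tfrac{\alpha'}{2}$ be the large coordinate; the remaining coordinates lie in $[0,\tfrac{\alpha'}{2})$ and sum to $s:=1-t_{k_0}<1-\tfrac{\alpha'}{2}<\tfrac{\alpha'}{2}$. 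Maximizing the convex function $\sum_{k\neq k_0}\bigl(-r(t_k)\bigr)$ over the simplex $\{t_k\ge 0,\ \sum_{k\neq k_0}t_k=s\}$ puts all mass on a single vertex, giving $\sum_{k\neq k_0}\bigl(-r(t_k)\bigr)\le -r(s)$ (using $r(0)=0$); hence $\sum_k r(t_k)\ge r(t_{k_0})+r(1-t_{k_0})$ and the problem collapses to the two-term inequality $r(a)+r(1-a)\ge 0$ for $a=t_{k_0}\in(\tfrac{\alpha'}{2},1]$. A direct simplification then gives
\[
r(a)+r(1-a)=\frac{\alpha'-1}{2}\left(1-\frac{(\alpha')^2}{(2a-\alpha')\,(2-2a-\alpha')}\right),
\]
and since $2a-\alpha'>0$ while $2-2a-\alpha'=2(1-a)-\alpha'<0$, the product in the denominator is negative, the bracket exceeds $1$, and the expression is positive. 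This closes the argument.

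The genuinely delicate part is not the matrix theory but the passage to, and proof of, the one-dimensional inequality $\sum_k r(t_k)\ge 0$ for general $n$: the two ingredients that make it work are the observation that at most one coordinate can be large (forcing a rank-one positive perturbation of an otherwise nonpositive Hessian) and the convexity of $-r$ on $[0,\tfrac{\alpha'}{2})$, which lets the vertex-maximization reduce everything to two variables. The main place where care is required is the sign bookkeeping in the secular-function step, i.e. checking that the two cases (positive entry in $\rho$ versus on the diagonal) collapse to the very same scalar inequality.
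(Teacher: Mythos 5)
Your proof is correct, and it splits into two halves: a matrix-theoretic reduction that essentially mirrors the paper, and a scalar inequality where you take a genuinely different and more economical route. Like the paper, you reduce $\nabla^2 f = D + \rho\,\underline{1}\,\underline{1}^T \preceq 0$ to the single scalar inequality $\sum_{k=1}^n 1/q(t_k) \ge 0$, using the same sign observation (at most one $t_k$ can exceed $\tfrac{\alpha'}{2}$, because $\tfrac{\alpha'}{2} > \tfrac12$ and $\sum_k t_k = 1$) and the same ingredients of Fact~\ref{fact: Bunch} (interlacing to dispose of all eigenvalues but one, then the secular equation for the remaining one); the paper argues by contradiction in two cases that are exactly your two cases of where the positive entry sits, whereas you phrase the step as an equivalence through the monotonicity of $W$ on the interval containing $0$ --- both computations go through, and your Cauchy--Schwarz aside, applied to $v_{k_0}^2 = \bigl(\sum_{k \neq k_0} v_k\bigr)^2$ on the tangent space $\sum_k v_k = 0$, is a genuine simplification since it bypasses Fact~\ref{fact: Bunch} and the distinctness bookkeeping entirely (the paper instead appeals to continuity of eigenvalues in $\underline{t}$). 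The real divergence is in the scalar inequality: the paper proves it as Lemma~\ref{lemma3: q} by induction on $n$, which requires two separate algebraic lemmas (Lemma~\ref{lemma1: q} as base case, Lemma~\ref{lemma2: q} for the induction step), while you note that $r = 1/q$ satisfies $r''(t) = 2(\alpha')^2(2t-\alpha')^{-3} < 0$ on $[0, \tfrac{\alpha'}{2})$, so that maximizing the convex function $\sum_{k \neq k_0}\bigl(-r(t_k)\bigr)$ over the simplex of the small coordinates at an extreme point (together with $r(0)=0$) collapses everything to the two-term inequality $r(a)+r(1-a) \ge 0$ --- which is precisely Lemma~\ref{lemma1: q} after the substitution $x = 1-a$, and your closed-form expression for $r(a)+r(1-a)$ and its sign analysis check out. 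What each approach buys: yours replaces the induction and Lemma~\ref{lemma2: q} by one convexity observation plus extreme-point maximization, isolating the two-variable case as the only computation needed; the paper's induction is more pedestrian but self-contained, invoking nothing beyond direct algebra. One minor imprecision on your side: when the positive entry is $\rho$, the interval on which $W$ is monotone and contains $0$ is $\bigl(q(t_{n-1}), +\infty\bigr)$, bounded by only one diagonal value rather than two; this does not affect your argument.
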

\begin{proof}
See Appendix~\ref{appendix: concavity of f}.
\end{proof}

\subsubsection{Solution of the convex optimization problem in
\eqref{opt. prob. with n-1 variables}}
\label{subsection: Solving the convex optimization problem}
In the following, we solve the convex optimization problem in \eqref{opt. prob. with n-1 variables}
via the Lagrange duality and KKT conditions (see, e.g., \cite[Chapter~5]{BoydV_book}).
Since the problem is invariant to permutations of the entries of $X = (X_1, \ldots, X_n)$, it can be
assumed without any loss of generality that the last term of the vector $\underline{N}_{\alpha}$ in
\eqref{eq: REP vector} is maximal, i.e.,
\begin{align} \label{eq: last entry of N is maximal}
N_\alpha(X_k) \leq N_\alpha(X_n), \quad k \in \{1, \ldots, n-1\}.
\end{align}
Moreover, it is assumed that
\begin{align} \label{positive}
N_\alpha(X_n)>0.
\end{align}
The possibility that $N_\alpha(X_n)=0$ leads to a trivial bound since from
\eqref{eq: last entry of N is maximal}, it follows that $N_\alpha(X_k)=0$
for every $k \in \{1,\ldots,n\}$; this makes the right side of \eqref{Intro: R-EPI}
be equal to zero, while its left side is always non-negative. Let
\begin{align}
\label{eq: c_k}
& c_k = \frac{N_\alpha(X_k)}{N_\alpha(X_n)},
\quad \quad k \in \{1, \ldots, n-1\}.
\end{align}
From \eqref{eq: last entry of N is maximal}--\eqref{eq: c_k}, the sequence
$\{c_k\}_{k=1}^{n-1}$ satisfies
\begin{align} \label{eq: sequence c is increasing and bounded}
0 \leq c_k \leq 1, \quad k \in \{1, \ldots, n-1\}.
\end{align}
Let $t_n$ be defined as in \eqref{eq: tn}. Appendix~\ref{appendix: Lagrangian}
provides the technical details which are related to the solution of the convex
optimization problem in \eqref{opt. prob. with n-1 variables}
via the Lagrange duality and KKT conditions (note that strong duality holds here).
The resulting simplified set of constraints which follow from the KKT conditions (see
Appendix~\ref{appendix: Lagrangian}) is given by
\begin{align}
\label{eq1s: KKT}
& t_k (\alpha'-t_k)= c_k t_n (\alpha'-t_n), \quad k \in \{1, \ldots, n-1\} \\
\label{eq2s: KKT}
& \sum_{k=1}^n t_k = 1 \\
\label{eq3s: KKT}
& t_k  \geq 0, \quad k \in \{1, \ldots, n\}
\end{align}
with the variables  $\underline{t}$ in
\eqref{eq1s: KKT}--\eqref{eq3s: KKT}.

Note that if $N_{\alpha}(X_k)$ is independent of $k$ then,
from \eqref{eq: c_k}, $c_k = 1$ for all $k \in \{1, \ldots, n-1\}$.
Hence, from \eqref{eq1s: KKT} and \eqref{eq2s: KKT}, it follows that
$t_1 = \ldots = t_n = \frac1n$ (note that the other
possibility where $t_k = \alpha' - t_n$ for some $k \in \{1, \ldots, n-1\}$
contradicts \eqref{eq2s: KKT} and \eqref{eq3s: KKT} since in this case
$\sum_{j=1}^n t_j \geq t_k + t_n = \alpha' > 1$). This implies that the
selection of the $t_k$'s in the proof of Theorem~\ref{theorem: REPI1}
is optimal when all the entries of the vector
$\underline{N}_{\alpha}$ are equal; therefore, the R-EPI considered
here improves the bound in Theorem~\ref{theorem: REPI1}
only when $N_{\alpha}(X_k)$ depends on the index $k$.

In the general case, \eqref{eq1s: KKT} yields a quadratic
equation for $t_k$ whose solutions are given by
\begin{align} \label{eq: possible t_k's}
t_k = \tfrac12 \, \left(\alpha' \pm \sqrt{\alpha'^{\, 2}
- 4 c_k t_n (\alpha'-t_n)} \right)
\end{align}
with $\alpha'=\frac{\alpha}{\alpha-1}$.
The possibility of the positive sign in the right side of
\eqref{eq: possible t_k's} is rejected since in this case
$t_n + t_k \geq \alpha' > 1$, which violates
\eqref{eq2s: KKT}. Hence, from \eqref{eq: possible t_k's},
for all $k \in \{1, \ldots, n-1\}$
\begin{align} \label{eq: the t_k solution}
t_k = \psi_{k,\alpha}(t_n)
\end{align}
where we define
\begin{align} \label{eq: psi_k}
&\psi_{k,\alpha}(x) = \tfrac12 \, \left(\alpha' - \sqrt{\alpha'^{\, 2}
- 4 c_k \, x (\alpha'-x)}\right),
\quad x \in [0,1].
\end{align}
In view of \eqref{eq2s: KKT} and \eqref{eq: the t_k solution}, one
first calculates $t_n \in [0,1]$ by numerically solving the equation
\begin{align} \label{eq for tn}
t_n + \sum_{k=1}^{n-1} \psi_{k,\alpha}(t_n) = 1.
\end{align}
The existence and uniqueness of a solution of \eqref{eq for tn}
is proved in Appendix~\ref{appendix: Exist and Uniq}. Once we compute
$t_n$, all $t_k$'s for $k \in \{1, \ldots, n-1\}$ are
computed from \eqref{eq: the t_k solution}.
Finally, the substitution of $t_1, \ldots, t_n$ in
the right side of \eqref{key inequality} enables to calculate the improved
R-EPI in \eqref{key inequality}, i.e.,
\begin{align} \label{eq: tightest REPI}
N_\alpha \left( \sum_{k=1}^n X_k \right)
\geq \exp\bigl(f_0(t_1, \ldots, t_n)\bigr) \, \sum_{k=1}^n N_\alpha(X_k)
\end{align}
with $f_0$ in \eqref{eq: f0}.

Note that due to the optimal selection of the vector
$\underline{t} = (t_1, \ldots, t_n)$ in \eqref{eq: tightest REPI},
the R-EPI in this section provides an improvement
over the R-EPI in Theorem~\ref{theorem: REPI1} whenever
$N_{\alpha}(X_k)$ is not fixed as a function of the index $k$.
This leads to the following result:

\begin{theorem}
\label{theorem: tightest REPI}
Let $X_1, \ldots, X_n$ be independent random vectors with probability densities
defined on $\Reals^d$, let $N_{\alpha}(X_1), \ldots, N_{\alpha}(X_n)$ be their
respective R\'{e}nyi entropy powers of order $\alpha>1$, and let
$\alpha' = \frac{\alpha}{\alpha-1}$. Let the indices of $X_1, \ldots, X_n$
be set such that $N_{\alpha}(X_n)$ is maximal, and let
\begin{enumerate}
\item $\{c_k\}_{k=1}^{n-1}$ be the sequence defined in \eqref{eq: c_k}; \\[-0.4cm]
\item $t_n \in [0,1]$ be the unique solution of \eqref{eq for tn}; \\[-0.4cm]
\item $\{t_k\}_{k=1}^{n-1}$ be given in \eqref{eq: the t_k solution} and \eqref{eq: psi_k}.
\end{enumerate}
Then, the R-EPI in \eqref{eq: tightest REPI} holds with $f_0$ in \eqref{eq: f0}, and it
satisfies the following properties:
\begin{enumerate}
\item It improves the R-EPI in Theorem~\ref{theorem: REPI1} unless $N_{\alpha}(X_k)$ is
independent of $k$ (consequently, it also improves the R-EPI in \cite[Theorem~1]{BobkovC15});
if $N_{\alpha}(X_k)$ is independent of $k$, then the two R-EPIs in Theorem~\ref{theorem: REPI1}
and \eqref{eq: tightest REPI} coincide.
\item It improves the Bercher-Vignat (BV) bound in \cite{BercherVignat} which states that
\begin{align} \label{BV bound - general n}
N_{\alpha}\left( \sum_{k=1}^n X_k \right) \geq
\max \bigl\{ N_{\alpha}(X_1), \ldots, N_{\alpha}(X_n) \bigr\}
\end{align}
and the bounds in \eqref{eq: tightest REPI} and \eqref{BV bound - general n} asymptotically
coincide as $\alpha \to \infty$ if and only if
\begin{align} \label{Tightest REPI meets BV}
\sum_{k=1}^{n-1}N_{\infty}(X_k) \leq N_{\infty}(X_n)
\end{align}
where $N_{\infty}(X)$ is defined in \eqref{eq: N at infinity}.
\item For $n=2$, it is expressed in a closed form (see Corollary~\ref{proposition: improved bound n=2}).
\item It coincides with the EPI and the two R-EPIs in \cite[Theorem~1]{BobkovC15}
and Theorem~\ref{theorem: REPI1} as $\alpha \downarrow 1$.
\end{enumerate}
\end{theorem}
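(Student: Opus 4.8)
The plan is to first record the bound \eqref{eq: tightest REPI} and then verify Items~1)--4) one at a time. The inequality itself is essentially assembled from the preceding development: the key inequality \eqref{key inequality} holds for every $\underline{t}\in\mathcal{P}^n$, and by Proposition~\ref{proposition: concavity of f} (together with the elementary case $\alpha'\ge 2$, in which $g$ in \eqref{eq: g} is concave) the reduced problem \eqref{opt. prob. with n-1 variables} is a concave maximization. Its optimizer is therefore characterized by the KKT system \eqref{eq1s: KKT}--\eqref{eq3s: KKT}, which yields $t_k=\psi_{k,\alpha}(t_n)$ as in \eqref{eq: the t_k solution}--\eqref{eq: psi_k}, with $t_n$ the unique root of \eqref{eq for tn} (existence and uniqueness from Appendix~\ref{appendix: Exist and Uniq}). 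Substituting this optimizer into \eqref{key inequality} and restoring the general case by homogeneity of $N_\alpha$ gives \eqref{eq: tightest REPI}; the ordering \eqref{eq: last entry of N is maximal} and the trivial case $N_\alpha(X_n)=0$ are treated as in the text preceding the statement.

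For Item~1) I would compare optima. The present bound uses $\max_{\underline{t}\in\mathcal{P}^n} f_0(\underline{t})$, whereas Theorem~\ref{theorem: REPI1} evaluates $f_0$ at the feasible vector $\hat{t}_k=N_\alpha(X_k)$ (under the normalization $\sum_k N_\alpha(X_k)=1$) and then replaces it by the tangent-line estimate \eqref{eq2: convex f} to obtain $\log c_\alpha^{(n)}$. Hence $\max_{\underline t} f_0\ge f_0(\hat{\underline t})\ge \log c_\alpha^{(n)}$. If the $N_\alpha(X_k)$ are not all equal, then $\hat{\underline t}$ is not the uniform vector and the strict convexity of $f$ in \eqref{eq: f} makes \eqref{eq2: convex f} strict for some $k$, so the improvement is strict. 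If they are all equal then $c_k=1$ in \eqref{eq: c_k}, the KKT system forces $t_1=\dots=t_n=\tfrac1n$ (the alternative branch $t_k=\alpha'-t_n$ being excluded as noted before the statement), this equals $\hat{\underline t}$, and the tangent estimate is tight; the two bounds then coincide.

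For Item~2) I would separate the finite-$\alpha$ domination from the $\alpha\to\infty$ coincidence. Evaluating $f_0$ at the vertex $e_n$ of the simplex, and using $g(0)=0$, $g(1)=(\alpha'-1)\log\tfrac{\alpha'-1}{\alpha'}$, $\tfrac{\alpha'-1}{\alpha'}=\tfrac1\alpha$ and $\tfrac{\log\alpha}{\alpha-1}=(\alpha'-1)\log\alpha$, the two $\alpha$-dependent terms cancel and (under the above normalization) $\exp\bigl(f_0(e_n)\bigr)=N_\alpha(X_n)$. Since $e_n$ is feasible, $\max_{\underline t} f_0\ge f_0(e_n)$, so \eqref{eq: tightest REPI} dominates the BV bound \eqref{BV bound - general n}. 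For the asymptotic claim I would pass to the limit $\alpha\to\infty$ (so $\alpha'\downarrow1$) in \eqref{eq for tn}: since $\psi_{k,\alpha}(x)\to\tfrac12\bigl(1-\sqrt{1-4c_k^\infty\, x(1-x)}\,\bigr)$ with $c_k^\infty=N_\infty(X_k)/N_\infty(X_n)$, the optimizer $t_n$ tends to the relevant root of $F(\tau)=\tau+\tfrac12\sum_{k<n}\bigl(1-\sqrt{1-4c_k^\infty\,\tau(1-\tau)}\,\bigr)=1$. One computes $F(1)=1$ and $F'(1)=1-\sum_{k<n}c_k^\infty$; hence the vertex $\tau=1$ (equivalently $\underline{t}^\ast\to e_n$, where \eqref{eq: tightest REPI} meets BV) is the selected root exactly when $F'(1)\ge0$, i.e.\ when $\sum_{k<n}N_\infty(X_k)\le N_\infty(X_n)$, which is \eqref{Tightest REPI meets BV}; otherwise $F'(1)<0$ forces, by the intermediate value theorem, an interior root $\tau<1$ and the two bounds separate. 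This asymptotic selection of the optimizer is the main obstacle, as it requires controlling the maximizer of the family of concave problems uniformly as $\alpha'\downarrow1$.

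Items~3) and~4) are short. For $n=2$ the system \eqref{eq1s: KKT}--\eqref{eq2s: KKT} with $t_1=1-t_2$ reduces to the single quadratic $(1-t_2)(\alpha'-1+t_2)=c_1\,t_2(\alpha'-t_2)$, whose admissible root is explicit and yields the closed-form R-EPI of Corollary~\ref{proposition: improved bound n=2}. For Item~4) I would use a squeeze: Item~1) gives $c_\alpha^{(n)}\le\exp\bigl(f_0(\underline t^\ast)\bigr)$, while applying \eqref{eq: tightest REPI} to independent Gaussians with proportional covariances, for which $N_\alpha(S_n)=\sum_k N_\alpha(X_k)$, forces $\exp\bigl(f_0(\underline t^\ast)\bigr)\le1$; since $c_\alpha^{(n)}\to1$ as $\alpha\downarrow1$ by Item~3) of Theorem~\ref{theorem: REPI1}, the coefficient tends to $1$ and \eqref{eq: tightest REPI} recovers the EPI, in agreement with \cite[Theorem~1]{BobkovC15} and Theorem~\ref{theorem: REPI1}.
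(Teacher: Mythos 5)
Your outline tracks the paper's own proof almost step for step: the bound itself via Proposition~\ref{proposition: concavity of f} and the KKT system \eqref{eq1s: KKT}--\eqref{eq3s: KKT}, Item~1) by comparing the optimal $\underline{t}$ against the sub-optimal choice $\hat{t}_k=N_\alpha(X_k)$ of Theorem~\ref{theorem: REPI1} (with the correct strictness argument via the tangent-line step), Item~2)'s finite-$\alpha$ part by evaluating $f_0$ at the vertex --- your computation $\exp\bigl(f_0(e_n)\bigr)=N_\alpha(X_n)$ is exactly the paper's \eqref{temp} made explicit --- Item~3) by the $n=2$ quadratic, and Item~4) by the same squeeze between a constant tending to $1$ and the Gaussian-induced upper bound $\exp\bigl(f_0(\underline{t}^\ast)\bigr)\le 1$. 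All of this is sound and matches the paper.

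The genuine gap is the one you name yourself and then leave open: the ``only if'' half of the asymptotic claim in Item~2). When $\sum_{k=1}^{n-1}c_k^\star>1$ (where $c_k^\star=N_\infty(X_k)/N_\infty(X_n)$, and your $F$ is the paper's $\phi_\infty$ in \eqref{eq: phi infty equals}), the limiting equation $\phi_\infty(\tau)=1$ has \emph{two} roots in $[0,1]$: $\tau=1$ is always a root since $\phi_\infty(1)=1$, and the intermediate value theorem produces an interior root $t^{(1)}<1$. So asserting that $t_n(\alpha)$ ``tends to the relevant root'' is precisely the claim requiring proof; nothing in the pointwise limit of \eqref{eq for tn} rules out $t_n(\alpha)\to 1$, in which case the two bounds would coincide asymptotically even though \eqref{Tightest REPI meets BV} fails. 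The paper closes this in Appendix~\ref{appendix: Tightest REPI meets BV}: from $\phi_\infty'(1)=1-\sum_{k=1}^{n-1}c_k^\star<0$ and continuity, there is $\delta>0$ with $\phi_\infty(x)>1$ on $(1-\delta,1)$; then, using continuity of $\phi_\alpha$ in $\alpha$ together with $\phi_\alpha(1)>1$ for finite $\alpha$ (see \eqref{eq: phi at 1 > 1}), there exists $\alpha_0$ such that $\phi_\alpha(x)>1$ for all $\alpha>\alpha_0$ and $x\in(1-\delta,1]$; since $\phi_\alpha\bigl(t_n(\alpha)\bigr)=1$, this forces $t_n(\alpha)\le 1-\delta$ for all $\alpha>\alpha_0$, so every convergent subsequence of $\{t_n(\alpha)\}$ must converge to $t^{(1)}$, and the bound of Theorem~\ref{theorem: tightest REPI} stays strictly above the BV bound in the limit. (One also needs that $t^{(1)}$ and $1$ are the \emph{only} roots --- this follows from the monotonicity of $\phi_\infty$ on $[0,\tfrac12]$ and its concavity on $[\tfrac12,1]$, inherited from Appendix~\ref{appendix: Exist and Uniq} --- to make ``the'' interior root meaningful.) Your ``if'' direction is essentially complete as stated, since there $\phi_\infty'(1)\ge 0$ makes $\tau=1$ the unique root and convergence of $t_n(\alpha)$ to it is routine; it is only the ``only if'' direction where the uniform-in-$\alpha$ control of the optimizer must actually be supplied.
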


\begin{proof}
The proof of the R-EPI in \eqref{eq: tightest REPI} is provided earlier in this section
with some additional details in Appendices~\ref{appendix: concavity of f}--\ref{appendix: Tightest REPI meets BV}.
In view of the this analysis:
\begin{itemize}
\item
Item~1) holds since the proof of the R-EPI in Theorem~\ref{theorem: REPI1} relies in general
on a sub-optimal choice of the vector $\underline{t}$ in \eqref{eq: suboptimal t}, whereas
it is set to be optimal in the proof of Theorem~\ref{theorem: tightest REPI} in
\eqref{eq: the t_k solution}--\eqref{eq for tn}.
Suppose, however, that $N_{\alpha}(X_k)$ is independent of the index $k$; in the latter case,
the selection of the vector $\underline{t}$ in the proof of Theorem~\ref{theorem: REPI1} (see
\eqref{eq: suboptimal t}) reduces to $\underline{t} = \bigl( \tfrac1n, \ldots, \tfrac1n \bigr)$,
which turns to be optimal in the sense of achieving the maximum of the objective function in
\eqref{eq: rewriting f}.
\item Item~2) holds since the selection of $\underline{t}$ in the right side of \eqref{key inequality}
with $t_k=1$ and $t_i = 0$ for all $i \neq k$ yields
\begin{align}  \label{temp}
N_{\alpha}\left( \sum_{k=1}^n X_k \right) \geq N_{\alpha}(X_k)
\end{align}
which then leads to \eqref{BV bound - general n} by a maximization of the right side of
\eqref{temp} over $k \in \{1, \ldots, n\}$. Appendix~\ref{appendix: Tightest REPI meets BV}
proves that the bounds in \eqref{eq: tightest REPI} and \eqref{BV bound - general n}
asymptotically coincide as $\alpha \to \infty$ if and only if the condition in
\eqref{Tightest REPI meets BV} holds.
\item Item~3) is proved in Section~\ref{subsection: The Two Element Case}.
\item Item~4) holds since the R-EPI obtained in Theorem~\ref{theorem: tightest REPI} is at least
as tight as the BC bound in \cite[Theorem~1]{BobkovC15}; the latter coincides with the EPI
as we let $\alpha$ tend to 1 (recall that, from \eqref{eq: c for BC}, $\lim_{\alpha \downarrow 1} c_{\alpha} = 1$)
which is known to be tight for Gaussian random vectors with proportional covariances.
\end{itemize}
\end{proof}

\begin{remark} \label{Remark:tightness Th2}
The R-EPI in Theorem~\ref{theorem: tightest REPI} provides the tightest R-EPI known to date
for $\alpha \in (1, \infty)$. Nevertheless, it is still not tight for $\alpha \in (1, \infty)$
since at least one of the inequalities involved in the derivation of \eqref{eq: R-EPI BC15}
(see Appendix~\ref{appendix: BC15}) is loose. These include the sharpened Young's inequality
in \eqref{eq: Young's ineq. 3}, and \eqref{inequality:Pre -  Holder for densities}.
The former inequality holds with equality only for Gaussians, whereas the latter inequality
holds with equality only for a uniformly distributed random variable (note that in the latter
case, the R\'enyi entropy is independent of its order). For $\alpha=\infty$ and $n=2$, the
sharpened Young's inequality \eqref{eq: Young's ineq. 1} reduces to
\begin{align} \label{eq:Young alpha=inf,n=2}
\|f \ast g\|_\infty \leq \|f\|_p \, \|g\|_{p'}
\end{align}
where $p>1$ and $p'=\frac{p}{p-1}$. Equality holds in \eqref{eq:Young alpha=inf,n=2} if $f$ and $g$
are scaled versions of a uniform distribution on the same convex set, which is also the same condition
for tightness of \eqref{inequality:Pre -  Holder for densities}; this is consistent with our conclusion
that the R-EPIs in Theorems~\ref{theorem: REPI1} and~\ref{theorem: tightest REPI} are, however, asymptotically
tight for $n=2$ by letting $\alpha \to \infty$.
\end{remark}

\vspace*{0.1cm}
\begin{figure}[here!]
\begin{center}
\epsfig{file=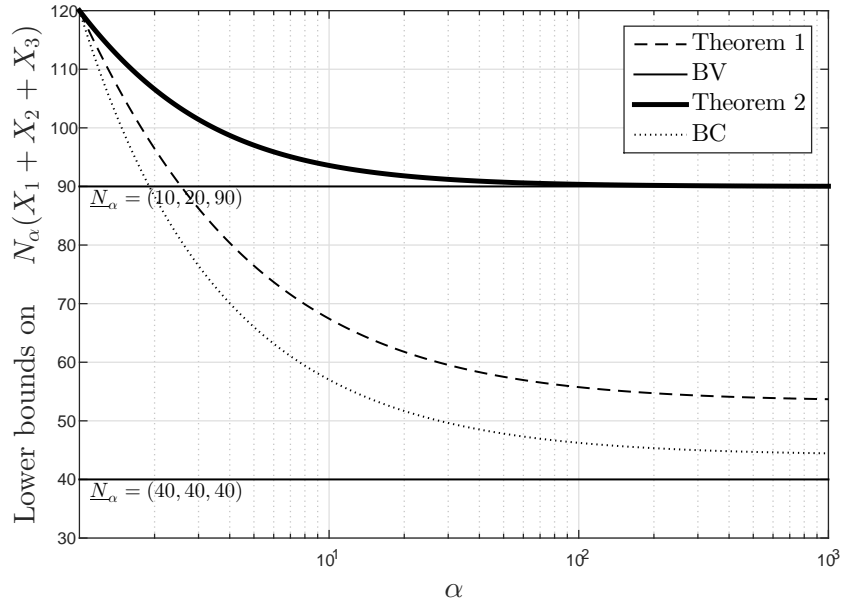,scale=0.65}
\caption{ \label{Figure: n=3}
A comparison of the R\'{e}nyi entropy power inequalities for $n=3$ independent random vectors
according to \cite{BobkovC15} (BC), \cite{BercherVignat} (BV),
Theorem~\ref{theorem: REPI1} and the tightest bound in Theorem~\ref{theorem: tightest REPI}.
The bounds refer to the two cases where $(N_\alpha(X_1), N_\alpha(X_2), N_\alpha(X_3))=(40,40,40)$
or $(10,20,90)$ (in both cases, the sum of the entries is 120; in the former case, the condition
in \eqref{Tightest REPI meets BV} does not hold, while in the latter it does).}
\end{center}
\end{figure}

Figure~\ref{Figure: n=3} compares the two R-EPIs in Theorems~\ref{theorem: REPI1}
and~\ref{theorem: tightest REPI} with those in \cite{BobkovC15}
(see \eqref{eq: c for BC}) and \cite{BercherVignat} (see \eqref{BV bound - general n})
for $n=3$ independent random vectors; the abbreviations 'BC' and 'BV' stand, respectively,
for the latter two bounds. Recall that the four bounds are independent of the dimension $d$
of the random vectors, and they are plotted in Figure~\ref{Figure: n=3} for symmetric
and asymmetric cases where $(N_\alpha(X_1), N_\alpha(X_2), N_\alpha(X_3)) = (40, 40, 40)$
and $(10, 20, 90)$, respectively (note that in both cases, the sum of the entries is equal to 120).
In the former case, for every $\alpha > 1$, Theorem~\ref{theorem: tightest REPI} provides a
lower bound on $N_{\alpha}(X_1 + X_2 + X_3)$ which is tighter than those in
\cite{BercherVignat} and \cite{BobkovC15}; furthermore, in this special case where $N_{\alpha}(X_k)$
is independent of the index $k$, the bounds in Theorems~\ref{theorem: REPI1} and~\ref{theorem: tightest REPI}
coincide. In the asymmetric case, however, where $(N_\alpha(X_1), N_\alpha(X_2), N_\alpha(X_3)) =
(10, 20, 90)$, the bound in Theorem~\ref{theorem: tightest REPI} suggests a significant improvement
over the bound in Theorem~\ref{theorem: REPI1} due to the sub-optimality of the choice of the vector
$\underline{t}$ in the proof of Theorem~\ref{theorem: REPI1} in comparison to its optimal choice
in Theorem~\ref{theorem: tightest REPI}. As it is shown in Figure~\ref{Figure: n=3} and supported
by Item~2) of Theorem~\ref{theorem: tightest REPI}, the bound in this theorem asymptotically
coincides with the BV bound (by letting $\alpha \to \infty$) in the considered asymmetric case;
however, for every $\alpha \in (1, \infty)$, the bound in Theorem~\ref{theorem: tightest REPI}
is advantageous over the BV bound. It is also shown in Figure~\ref{Figure: n=3} that in this asymmetric case,
the BV bound is advantageous over our bound in Theorem~\ref{theorem: REPI1} for sufficiently large $\alpha$;
this observation emphasizes the significance of the optimization of the vector $\underline{t}$ in the proof
of Theorem~\ref{theorem: tightest REPI}, yielding the tightest R-EPI known to date for $\alpha > 1$.
Finally, as it is shown in Figure~\ref{Figure: n=3}, the R-EPIs of Theorems~\ref{theorem: REPI1}
and~\ref{theorem: tightest REPI}, as well as \cite[Theorem~1]{BobkovC15}, coincide with the EPI as
we let $\alpha$ tend to~1 (from above).

\subsection{A Tightened R-EPI for $n=2$}
\label{subsection: The Two Element Case}
We derive in the following a closed-form expression of the R-EPI in Theorem~\ref{theorem: tightest REPI}
for $n=2$ independent random vectors.
In the sequel, we make use of the binary relative entropy function which is defined to be the
continuous extension to $[0,1]^2$ of
\begin{align} \label{eq: binary RE}
d(x\|y) = x \log \left(\frac{x}{y}\right) + (1-x) \log \left(\frac{1-x}{1-y}\right).
\end{align}

\begin{corollary}
\label{proposition: improved bound n=2}
Let $X_1$ and $X_2$ be independent random vectors with densities defined on $\Reals^d$,
let $N_{\alpha}(X_1)$, $N_{\alpha}(X_2)$ be their R\'{e}nyi entropy powers of order
$\alpha>1$, and assume without any loss of generality that $N_{\alpha}(X_1) \leq N_{\alpha}(X_2)$.
Let
\begin{align}
& \alpha' = \frac{\alpha}{\alpha-1}, \\[0.1cm]
\label{eq: beta for n=2}
& \beta_{\alpha} = \frac{N_{\alpha}(X_1)}{N_{\alpha}(X_2)}, \\[0.1cm]
& \label{eq: t* for n=2}
t_{\alpha} = \left\{
\begin{array}{cl}
\frac{\alpha' (\beta_\alpha+1) - 2\beta_\alpha - \sqrt{(\alpha' \,  (\beta_\alpha+1))^2
- 8\alpha' \beta_\alpha + 4\beta_\alpha} }{2(1-\beta_\alpha)}
&\; \mbox{if } \beta_\alpha < 1,\\[0.2cm]
\tfrac1{2} &\; \mbox{if } \beta_\alpha = 1.
\end{array}
\right.
\end{align}
Then, the following R-EPI holds:
\begin{align} \label{eq: improved R-EPI n=2}
N_{\alpha}(X_1 + X_2) \geq c_\alpha \; \bigl( N_{\alpha}(X_1) + N_{\alpha}(X_2) \bigr)
\end{align}
with
\begin{align}
\label{eq: c for n=2}
c_\alpha &= \alpha^{\frac1{\alpha-1}} \,
\exp\left(-d\Bigl(t_{\alpha} \, \bigl\| \, \frac{\beta_{\alpha}}{\beta_{\alpha}+1}
\Bigr)\right) \, \left(1-\frac{t_{\alpha}}{\alpha'}\right)^{\alpha'-t_{\alpha}}
\left(1-\frac{1-t_{\alpha}}{\alpha'}\right)^{\alpha'-1+t_{\alpha}}.
\end{align}
The R-EPI in \eqref{eq: improved R-EPI n=2} satisfies Items~1)--4) of
Theorem~\ref{theorem: tightest REPI}; specifically, by letting $\alpha \to \infty$,
the lower bound on $N_{\alpha}(X_1+X_2)$ tends to $N_{\infty}(X_2)$, which
asymptotically coincides with the BV bound in \cite{BercherVignat}.
\end{corollary}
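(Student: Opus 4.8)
The plan is to specialize Theorem~\ref{theorem: tightest REPI} to $n=2$ and to solve the KKT system \eqref{eq1s: KKT}--\eqref{eq3s: KKT} in closed form. For $n=2$ the only ratio is $c_1 = \beta_\alpha$ (see \eqref{eq: c_k} and \eqref{eq: beta for n=2}), and the R-EPI \eqref{eq: tightest REPI} reads $N_\alpha(X_1+X_2) \geq \exp\bigl(f_0(t_1,t_2)\bigr)\,\bigl(N_\alpha(X_1)+N_\alpha(X_2)\bigr)$ with $(t_1,t_2)$ the optimizer and $f_0$ as in \eqref{eq: f0}. So it suffices to (i)~solve for the optimal $(t_1,t_2)$ explicitly, and (ii)~evaluate $c_\alpha := \exp\bigl(f_0(t_1,t_2)\bigr)$, matching the three factors of \eqref{eq: c for n=2}.

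For step~(i), I substitute $t_2 = 1 - t_1$ into \eqref{eq1s: KKT}, which turns the single active constraint into a quadratic in $t_1$:
\[
(1-\beta_\alpha)\,t_1^2 - \bigl[\alpha'(1+\beta_\alpha)-2\beta_\alpha\bigr]\,t_1 + \beta_\alpha(\alpha'-1) = 0.
\]
A direct expansion shows that its discriminant equals $\bigl(\alpha'(\beta_\alpha+1)\bigr)^2 - 8\alpha'\beta_\alpha + 4\beta_\alpha$, matching \eqref{eq: t* for n=2}. When $\beta_\alpha < 1$ the leading coefficient is positive; evaluating the quadratic at $t_1=1$ gives $1-\alpha' < 0$, so $t_1=1$ lies strictly between the two (positive) roots, whence exactly the smaller root—the one with the minus sign—lies in $(0,1)$. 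This is the admissible root (the plus sign would force $t_1+t_2 \geq \alpha' > 1$, contradicting \eqref{eq2s: KKT}), and it yields $t_\alpha$ in \eqref{eq: t* for n=2}; existence and uniqueness are guaranteed by Appendix~\ref{appendix: Exist and Uniq}. In the degenerate case $\beta_\alpha=1$ the quadratic collapses and the symmetry $N_\alpha(X_1)=N_\alpha(X_2)$ forces $t_\alpha = \tfrac12$, again as in \eqref{eq: t* for n=2}.

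For step~(ii), by the homogeneity \eqref{eq: REP} I may normalize $N_\alpha(X_1)+N_\alpha(X_2)=1$ (as in \eqref{eq: normalized REP}), so that $N_\alpha(X_1)=\tfrac{\beta_\alpha}{\beta_\alpha+1}$ and $N_\alpha(X_2)=\tfrac{1}{\beta_\alpha+1}$; since $c_\alpha$ depends only on $\alpha'$ and the scale-invariant $\beta_\alpha$, this normalization is harmless. Under it the relative-entropy term $D(\underline{t}\|\underline{N}_\alpha)$ in \eqref{eq: f0}, with $(t_1,t_2)=(t_\alpha,1-t_\alpha)$, is exactly the binary relative entropy $d\bigl(t_\alpha\,\|\,\tfrac{\beta_\alpha}{\beta_\alpha+1}\bigr)$ from \eqref{eq: binary RE}. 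The constant term of $f_0$ exponentiates to $\alpha^{1/(\alpha-1)}$, and exponentiating the third term, written as $(\alpha'-t_1)\log(1-\tfrac{t_1}{\alpha'}) + (\alpha'-t_2)\log(1-\tfrac{t_2}{\alpha'})$, produces the two power factors $(1-\tfrac{t_\alpha}{\alpha'})^{\alpha'-t_\alpha}(1-\tfrac{1-t_\alpha}{\alpha'})^{\alpha'-1+t_\alpha}$. Collecting these gives $c_\alpha=\exp\bigl(f_0(t_\alpha,1-t_\alpha)\bigr)$ as in \eqref{eq: c for n=2}, proving \eqref{eq: improved R-EPI n=2}.

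Finally, Items~1)--4) are inherited directly from Theorem~\ref{theorem: tightest REPI}. For the explicit $\alpha\to\infty$ claim I would let $\alpha'\to 1$ and $\beta_\alpha\to\beta_\infty := N_\infty(X_1)/N_\infty(X_2)$; the closed form of $t_\alpha$ then simplifies (the square root becoming $|1-\beta_\infty|=1-\beta_\infty$) to give $t_\alpha\to 0$. I expect the main obstacle to be the one indeterminate factor $(1-\tfrac{1-t_\alpha}{\alpha'})^{\alpha'-1+t_\alpha}$, which is of the form $u^u$ with $u=\alpha'-1+t_\alpha\to 0^+$ and hence tends to $1$; together with $\alpha^{1/(\alpha-1)}\to 1$ and $\exp\bigl(-d(t_\alpha\|\tfrac{\beta_\alpha}{\beta_\alpha+1})\bigr)\to \tfrac{1}{\beta_\infty+1}$, this yields $c_\alpha\to \tfrac{1}{\beta_\infty+1}$. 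Multiplying by $N_\infty(X_1)+N_\infty(X_2)$ collapses to $N_\infty(X_2)$, which is the BV bound \eqref{BV bound - general n}, as claimed.
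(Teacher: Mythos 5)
Your proposal is correct and follows essentially the same route as the paper: for $n=2$ the KKT equation \eqref{eq1s: KKT} with $t_2=1-t_1$ is exactly the paper's stationarity condition \eqref{eq:f'=0}, and your quadratic, its discriminant, the choice of the smaller root, and the evaluation of $\exp\bigl(f_0(t_\alpha,1-t_\alpha)\bigr)$ under the normalization \eqref{eq: normalized REP} all match the paper's derivation of \eqref{eq: t* for n=2} and \eqref{eq: c for n=2}. Two minor remarks: your parenthetical reason for discarding the plus root misfires—after substituting $t_2=1-t_1$ the sum is identically $1$, so the larger root (which in fact exceeds $\alpha'$) is rejected because it makes $t_2<0$, violating \eqref{eq3s: KKT}, not \eqref{eq2s: KKT}—but your primary sign-change argument (the quadratic is negative at $t_1=1$, positive at $t_1=0$) already suffices. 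For the $\alpha\to\infty$ claim you compute the limit of $c_\alpha$ directly, whereas the paper simply observes that condition \eqref{Tightest REPI meets BV} is vacuous for $n=2$ and invokes Item~2) of Theorem~\ref{theorem: tightest REPI}; both are valid, and your computation has the small bonus of exhibiting the limit $c_\alpha\to 1/(\beta_\infty+1)$ explicitly.
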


\begin{proof}
Due the constraints in \eqref{opt. prob. with n variables}, the vector $\underline{t}$
can be parameterized in the form $\underline{t} = (t, 1-t)$ for $t \in [0,1]$; due to
the normalization of the vector $\underline{N}_{\alpha} = (N_\alpha(X_1), N_\alpha(X_2))$
in \eqref{eq: normalized REP}, then
\begin{align}
& \underline{N}_\alpha = \left( \tfrac{\beta_\alpha}{1+\beta_\alpha}, \tfrac1{1+\beta_\alpha} \right)
\end{align}
and, by \eqref{eq: f0}, the maximization in \eqref{opt. prob. with n variables} is transformed to
\begin{align} \label{eq: max f_0 for n=2}
\begin{split}
\underset{t \in [0,1]}{\text{maximize}} & \left\{\frac{\log \alpha}{\alpha-1}
- t \log \bigl( (1+\beta_\alpha) t \bigr) - (1-t) \log \left( \frac{(1+\beta_\alpha)
(1-t)}{\beta_\alpha} \right) \right. \\[0.1cm]
& \left. + \alpha' \left[ \left(1-\frac{t}{\alpha'}\right)
\log\left(1-\frac{t}{\alpha'}\right)
+ \left(1-\frac{1-t}{\alpha'}\right) \log\left(1-\frac{1-t}{\alpha'}\right)
\right] \right\}.
\end{split}
\end{align}
It can be verified that the objective function in \eqref{eq: max f_0 for n=2}
is concave on $[0,1]$, it has a right derivative at $t=0$ which is equal to $+\infty$,
and a left derivative at $t=1$ which is equal to $-\infty$. This implies that the
maximization of the objective function over $[0,1]$ is attained at an interior point
of this interval. The optimized value of $t$ is obtained by setting the derivative
of this objective function to zero, leading to the equation
\begin{align} \label{eq:f'=0}
\log \left( \tfrac{(1-t)\beta_\alpha}{t} \right)-\log \left( \tfrac{\alpha'-t}{\alpha'-1+t} \right)=0.
\end{align}
Eq.~\eqref{eq:f'=0} can be expressed as a quadratic equation whose solution is given in \eqref{eq: t* for n=2}.
Substituting the optimized value $t=t_\alpha$ in \eqref{eq: t* for n=2} into the objective
function on the right side of \eqref{eq: max f_0 for n=2} leads to the closed-form
solution of the optimization problem in \eqref{opt. prob. with n variables} for $n=2$. Hence, under
the assumption in \eqref{eq: normalized REP} where $N_{\alpha}(X_1) + N_{\alpha}(X_2)=1$,
straightforward algebra yields that
\begin{align}
\label{eq: REPI for n=2; normalized case}
N_{\alpha}(X_1 + X_2) \geq c_\alpha
\end{align}
where $c_\alpha$ is given in \eqref{eq: c for n=2}; the relaxation of
this assumption requires the multiplication of the right side of
\eqref{eq: REPI for n=2; normalized case} by $N_{\alpha}(X_1) + N_{\alpha}(X_2)$
(due to the homogeneity of the R\'{e}nyi entropy power, see \eqref{eq: REP}).
Note that, for $n=2$, the condition in \eqref{Tightest REPI meets BV} becomes vacuous (since,
by assumption, $N_{\infty}(X_1) \leq N_{\infty}(X_2)$) which implies that
the bound in \eqref{eq: improved R-EPI n=2} asymptotically coincides with
the BV bound when $\alpha \to \infty$.
\end{proof}

\section{Example: The R\'{e}nyi Entropy Difference Between Data and its Filtering}
\label{section: example}
Let $\{\underline{X}(n)\}$ be i.i.d. $d$-dimensional random vectors (the
entries of the vector $\underline{X}(n)$ need not be independent), with arbitrary
densities on $\Reals^d$. Let
\begin{align}
\underline{Y}(n) = \sum_{k=0}^{L-1} {\bf{H}}_k \, \underline{X}(n-k)
\end{align}
be the filtered data at the output of a finite impulse response (FIR) filter where
${\bf{H}}_0, \ldots, {\bf{H}}_{L-1}$ are fixed non-singular $d \times d$ matrices.

In the following, the tightness of several R-EPIs is exemplified by obtaining universal
lower bounds on the difference
$h_{\alpha}\bigl(\underline{Y}(n)\bigr) - h_{\alpha}\bigl(\underline{X}(n)\bigr)$,
being also compared with the actual value of this difference when the i.i.d. inputs
are $d$-dimensional Gaussian random vectors with i.i.d. entries.

For $k \in \{0, \ldots, L-1\}$ and every $n$, we have
\begin{align} \label{100}
h_{\alpha}\bigl({\bf{H}}_k \, \underline{X}(n-k)\bigr)
= h_{\alpha}\bigl(\underline{X}(n)\bigr) + \log \bigl| \det({\bf{H}}_k) \bigr|
\end{align}
and
\begin{align}
& N_{\alpha}\bigl({\bf{H}}_k \, \underline{X}(n-k)\bigr) \nonumber \\
& = \exp \left( \tfrac{2}{d} \, h_{\alpha}\bigl({\bf{H}}_k \, \underline{X}(n-k)\bigr) \right) \nonumber \\
\label{101}
& = \bigl| \det({\bf{H}}_k) \bigr|^{\frac{2}{d}} \, N_{\alpha}\bigl(\underline{X}(n)\bigr).
\end{align}

Let $\alpha > 1$, and $\alpha' = \frac{\alpha}{\alpha-1}$. Similarly to
Theorem~\ref{theorem: tightest REPI}, it is assumed without loss of generality
that $\bigl| \det({\bf{H}}_k) \bigr| \leq \bigl| \det({\bf{H}}_{L-1})\bigr|$
for all $k \in \{0, \ldots, L-2\}$; otherwise, the indices of
${\bf{H}}_0, \ldots, {\bf{H}}_{L-1}$ can be permuted without affecting the
differential R\'{e}nyi entropy of $\underline{Y}(n)$.
In the setting of the
improved R-EPI of Theorem~\ref{theorem: tightest REPI},
in view of \eqref{eq: c_k} and
\eqref{101}, for every $k \in \{0, \ldots, L-2\}$,
\begin{align} \label{103}
c_k = \left( \frac{\bigl| \det({\bf{H}}_k) \bigr|}{\bigl| \det({\bf{H}}_{L-1}) \bigr|}
\right)^{\frac{2}{d}}
\end{align}
which, in view of the above assumption, implies that $c_k \in [0, 1]$ for
$k \in \{0, \ldots, L-2\}$. Given the $L$ matrices $\{{\bf{H}}_k\}_{k=0}^{L-1}$,
the vector $(t_0, \ldots, t_{l-1}) \in [0,1]^L$ is calculated according to
Theorem~\ref{theorem: tightest REPI}; first $t_{L-1} \in [0,1]$ is numerically
calculated by solving the equation in \eqref{eq for tn} (with a replacement of
$1$ and $n$ in \eqref{eq for tn} by $0$ and $L-1$, respectively), and then the
rest of the $t_k$'s for $k \in \{0, \ldots, L-2\}$ are being calculated via
\eqref{eq: the t_k solution} and \eqref{eq: psi_k}. In view of
\eqref{100}, \eqref{101}, and the R-EPI of Theorem~\ref{theorem: tightest REPI},
it follows that for every~$n$
\begin{align} \label{eq: example - Th. 2}
\begin{split}
& h_{\alpha}\bigl(\underline{Y}(n)\bigr) - h_{\alpha}\bigl(\underline{X}(n)\bigr) \\
& \geq \frac{d}{2} \left( \frac{\log \alpha}{\alpha-1} + \sum_{k=0}^{L-1} g(t_k) \right)
+ \sum_{k=0}^{L-1} t_k \, \log \bigl| \det({\bf{H}}_k) \bigr|
\end{split}
\end{align}
where the function $g$ is given in \eqref{eq: g}.

In view of the derivation so far, it is easy to verify that the R-EPI in Theorem~\ref{theorem: REPI1}
is equivalent to the following looser bound, which is expressed in closed form:
\begin{align} \label{eq: example - Th. 1}
\begin{split}
& h_{\alpha}\bigl(\underline{Y}(n)\bigr) - h_{\alpha}\bigl(\underline{X}(n)\bigr) \\
& \geq \frac{d}{2} \cdot \log \left( \sum_{k=0}^{L-1} \bigl| \det({\bf{H}}_k) \bigr|^{\frac{2}{d}} \right) \\
& \hspace*{0.2cm} + \frac{d}{2} \left( \frac{\log \alpha}{\alpha-1} + \left( \frac{L \alpha}{\alpha-1} - 1 \right)
\, \log\left(1-\frac{\alpha-1}{L\alpha}\right) \right).
\end{split}
\end{align}

The R-EPI of \cite[Theorem~I.1]{BobkovC15} leads to the following loosened bound
in comparison to \eqref{eq: example - Th. 1}:
\begin{align} \label{eq: example - BC}
\begin{split}
h_{\alpha}\bigl(\underline{Y}(n)\bigr) - h_{\alpha}\bigl(\underline{X}(n)\bigr)
\geq \frac{d}{2} \left[ \log \left( \sum_{k=0}^{L-1} \bigl| \det({\bf{H}}_k) \bigr|^{\frac{2}{d}} \right)
+ \frac{\log \alpha}{\alpha-1} - \log e \right]
\end{split}
\end{align}
and, finally, the BV bound in \cite{BercherVignat} (see \eqref{BV bound - general n})
leads to the following loosening of \eqref{eq: example - Th. 2}:
\begin{align} \label{eq: example - BV}
\begin{split}
h_{\alpha}\bigl(\underline{Y}(n)\bigr) - h_{\alpha}\bigl(\underline{X}(n)\bigr)
\geq \log \left( \max_{0 \leq k \leq L-1} \bigl| \det({\bf{H}}_k) \bigr| \right).
\end{split}
\end{align}

The differential R\'{e}nyi entropy of order $\alpha \in (0,1) \cup (1, \infty)$ for a
$d$-dimensional multivariate Gaussian distribution is given by
\begin{align} \label{eq1: Gaussian}
h_{\alpha}\bigl(\underline{X}(n)\bigr) = \frac{d \log \alpha}{2(\alpha-1)}
+ \tfrac12 \, \log \Bigl( (2\pi)^d \, \det\bigl( \text{Cov}(\underline{X}(n)) \bigr) \Bigr)
\end{align}
Hence, if the entries of the Gaussian random vector $\underline{X}(n)$ are i.i.d.
\begin{align} \label{eq2: Gaussian}
h_{\alpha}\bigl(\underline{Y}(n)\bigr) - h_{\alpha}\bigl(\underline{X}(n)\bigr) = \tfrac12 \,
\log \left( \det \left( \sum_{k=0}^{L-1} {\bf{H}}_k \, {\bf{H}}_k^T \right) \right).
\end{align}

\begin{example}
Let
\begin{align}
Y(n) = 2 X(n) - X(n-1) - X(n-2)
\end{align}
for every $n$ where $\{X(n)\}$ are i.i.d. random variables, and consider the
difference $h_2(Y) - h_2(X)$ in the quadratic differential R\'{e}nyi entropy.
In this example $\alpha=2$, $d=1$, $L=3$, and $H_0=2$, $H_1=-1$, $H_2=-1$.
The lower bounds in \eqref{eq: example - Th. 2}, \eqref{eq: example - Th. 1},
\eqref{eq: example - BC}, \eqref{eq: example - BV} are equal to 0.8195,
0.7866, 0.7425 and 0.6931 nats, respectively (recall that the first two lower
bounds correspond to Theorems~\ref{theorem: tightest REPI} and~\ref{theorem: REPI1}
respectively, and the last two bounds correspond to \cite{BobkovC15} and
\cite{BercherVignat} respectively. These lower bounds are compared
to the achievable value in \eqref{eq2: Gaussian}, for an i.i.d. Gaussian input,
which is equal to 0.8959 nats.
\end{example}

\section{Summary}
\label{section: summary}
This work provides two forms of improved R\'{e}nyi entropy power inequalities
(R-EPI) for a sum of $n$ independent and continuous random vectors over $\Reals^d$.
These inequalities are of the form \eqref{Intro: R-EPI}, they refer to orders
$\alpha \in (1, \infty]$, and they also coincide with the EPI \cite{Shannon}
by letting $\alpha \to 1$. Theorem~\ref{theorem: REPI1} provides an R-EPI with a constant
which is given in closed form in \eqref{eq: c for R-EPI1}, improving the R-RPI by Bobkov and
Chistyakov in \cite[Theorem~1]{BobkovC15}; furthermore, for $n=2$, the R-EPI in
Theorem~\ref{theorem: REPI1} is asymptotically tight when $\alpha \to \infty$. The R-EPI
which is introduced in Theorem~\ref{theorem: tightest REPI} can be efficiently calculated
via a simple numerical algorithm, it is tighter than Theorem~\ref{theorem: REPI1}
and all previously reported bounds, and it is currently the best known R-EPI for
$\alpha \in (1, \infty)$. Corollary~\ref{proposition: improved bound n=2} provides
a closed-form expression for the R-EPI in Theorem~\ref{theorem: tightest REPI} for
a sum of two independent random vectors. It should be noted that the R-EPIs in
Theorems~\ref{theorem: REPI1} and~\ref{theorem: tightest REPI} coincide when the
R\'{e}nyi entropy powers of the $n$ independent random vectors are all equal.

Theorem~\ref{theorem: REPI1} is obtained by tightening the recent R-EPI by Bobkov
and Chistyakov \cite{BobkovC15} with the same analytical tools, namely the monotonicity
of $N_{\alpha}(X)$ in $\alpha$, and the use of the sharpened Young's inequality.
Theorem~\ref{theorem: tightest REPI}, which improves the tightness of the R-EPI in
Theorem~\ref{theorem: REPI1}, relies on the following additional analytical tools:
1)~a strong Lagrange duality of an optimization problem is asserted by invoking a
theorem in matrix theory \cite{Bunch} regarding the rank-one modification of a
real-valued diagonal matrix, and 2)~a solution of the Karush-Kuhn-Tucker (KKT)
equations of the related optimization problem.

\subsection*{Acknowledgement}
This work has been supported by the Israeli Science Foundation
(ISF) under Grant 12/12. A discussion with Sergio Verd\'{u} is
acknowledged. We would like to thank the Associate editor, and
the anonymous reviewers for their valuable feedback which helped
to improve the presentation in this paper.

\appendices

\section{Proof of \eqref{eq: R-EPI BC15}}
\label{appendix: BC15}
Since $\{X_k\}_{k=1}^n$ are independent, the density of $S_n=\sum\limits_{k=1}^n X_k$ is the
convolution of the densities $f_{X_k}$. In view of \eqref{eq: Young's ineq. 3} and
\eqref{eq:Pre - Renyi entropy power Norm dfn}, for $\alpha>1$,
\begin{align} \label{eq:using young}
\begin{array}{rl}
N_{\alpha}(S_n) &= \left( \|f_{X_1} \ast \ldots \ast f_{X_n}\|_\alpha \right)^{-\frac{2\alpha'}{d}} \\[0.2 cm]
                &\geq A^{-\frac{2\alpha'}{d}}\prod_{k=1}^n \left( \|f_{X_k}\|_{\nu_k} \right)^{-\frac{2\alpha'}{d}}
\end{array}
\end{align}
where
\begin{align}
\label{eq:nu_k 1}
& \nu_k > 1, \quad 1 \leq k \leq n \\
\label{eq:nu_k 2}
& \sum_{k=1}^n\frac1{\nu_k '} = \frac1{\alpha'}
\end{align}
and, due to \eqref{eq: property A} and \eqref{eq:Pre - Multiple Beckner's Best Constant},
\begin{align}
A = \left( A_{\alpha'} \prod_{k=1}^n A_{\nu_k} \right)^{\frac{d}{2}}.
\end{align}
From \eqref{eq:nu_k 1} and \eqref{eq:nu_k 2} it follows that
$\nu_k \in (1,\alpha]$ for all $k \in \{1,\ldots,n\}$, hence in view of
Corollary~\ref{corollary:Pre - Holder for densities},
\begin{align} \label{eq:using holder}
\|f_{X_k}\|_{\nu_k}^{\nu_k'} \leq
\|f_{X_k}\|_{\alpha}^{\alpha'}, \quad 1 \leq k \leq n.
\end{align}
Combining \eqref{eq:using young} and \eqref{eq:using holder}, and defining
$t_k=\frac{\alpha'}{\nu_k'}$ yields
\begin{align} \label{inequality:New REPI - Entropy Power Prod }
N_{\alpha}(S_n) \geq A^{-\frac{2\alpha'}{d}} \prod_{k=1}^n \left( \|f_{X_k}\|_{\alpha}
\right)^{-\frac{2\alpha'}{d} \cdot \frac{\alpha'}{\nu_k'}} = A^{-\frac{2\alpha'}{d}}
\prod_{k=1}^nN_{\alpha}^{t_k}(X_k)
\end{align}
which by setting $B=A^{-\frac{2\alpha'}{d}}$ completes the proof of \eqref{eq: R-EPI BC15}
with the constant $B$ as given in \eqref{eq1: BC1}.

\section{Proof of Proposition~\ref{proposition: concavity of f}}
\label{appendix: concavity of f}

In view of \eqref{eq: f function}, if $f_0$ is concave, so is $f$.
As it is verified in Section~\ref{subsection: optimization problem with f_0},
the function $f_0$ is concave for all $\alpha \in (1, 2)$ (i.e.,
$\alpha' \in (2, \infty)$), and hence also $f$ is concave for these values
of $\alpha$. We therefore need to prove the concavity of $f$ in \eqref{eq: f function}
whenever $\alpha' \in (1,2)$ (i.e., if $\alpha \in (2, \infty)$), although
$f_0$ is not concave for these values of $\alpha$.

Let $\alpha' \in (1,2)$.
If there exists an index $k \in \{1, \ldots, n-1\}$ such that $q(t_k)=0$, then
$t_k = \tfrac{\alpha'}{2} > \tfrac12$ (see \eqref{eq: q function}). In view of
\eqref{eq: polyhedron}, it follows that $t_l < \tfrac12$ for
every other index $l \neq k$ in the set $\{1, \ldots, n-1\}$, which in turn implies
from \eqref{eq: q function} that $q(t_l) < 0$ for every such index $l$. In other
words, if there exists an index $k \in \{1, \ldots, n-1\}$ such that
$q(t_k)=0$, then it follows that $q(t_l) \leq 0$ for all $l \in \{1, \ldots, n-1\}$.
In view of \eqref{eq: D and rho}, $D \preceq 0$ and $\rho < 0$ (to verify that
$\rho < 0$, note that since
$0 \leq 1- \sum_{j=1}^{n-1} t_j \leq 1-t_k=1-\tfrac{\alpha'}{2} < \tfrac12 < \tfrac{\alpha'}{2}$
then it follows from \eqref{eq: q function} and \eqref{eq: D and rho} that
$\rho = q\bigl(1-\sum_{j=1}^{n-1} t_j\bigr) < 0$); hence,
\eqref{eq1: Hessian of f} implies that $\nabla^2 f(t_1, \ldots, t_{n-1}) \prec 0$
in the interior of $\mathcal{D}^{n-1}$, so $f$ is (strictly) concave on $\mathcal{D}^{n-1}$.

To proceed, the following lemmas will be useful.
\begin{lemma} \label{lemma1: q}
If $\alpha' \in (1,2)$ and $x \in (0,1-\frac{\alpha'}{2})$, then
\begin{align}
\label{eq: lemma1 of q}
\frac1{q(x)}+\frac1{q(1-x)} > 0.
\end{align}
\end{lemma}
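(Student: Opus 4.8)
The plan is to put the two reciprocals over a common denominator and then determine the signs of the numerator and of the denominator separately. Writing $q$ as in \eqref{eq: q function}, one has $\frac1{q(x)} = \frac{x(\alpha'-x)}{2x-\alpha'}$, so
\begin{align}
\frac1{q(x)}+\frac1{q(1-x)} = \frac{N(x)}{(2x-\alpha')\bigl(2(1-x)-\alpha'\bigr)}
\end{align}
with $N(x) = x(\alpha'-x)\bigl(2(1-x)-\alpha'\bigr) + (1-x)\bigl(\alpha'-(1-x)\bigr)(2x-\alpha')$. First I would pin down the sign of the denominator: since $\alpha' \in (1,2)$ and $x \in \bigl(0,\, 1-\tfrac{\alpha'}{2}\bigr)$, the hypothesis $x < 1 - \tfrac{\alpha'}{2}$ gives $2x-\alpha' < 0$ while $2(1-x)-\alpha' > 0$, so the denominator is strictly negative. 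Hence it suffices to prove that $N(x) < 0$ throughout this interval.

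The key step is to exploit the $x \leftrightarrow 1-x$ symmetry through the substitution $x = \tfrac12 + y$, so that $1-x = \tfrac12 - y$. Denoting by $P(y)$ the first summand of $N$, the second summand is precisely $P(-y)$; therefore $N = P(y)+P(-y)$ is even in $y$, and in particular all odd-degree terms (notably the cubic) cancel. After a short expansion I expect $N$ to collapse to
\begin{align}
N = -2(\alpha'-1)\left[\left(\tfrac{\alpha'}{2}-\tfrac14\right) + y^2\right].
\end{align}
Since $\alpha' > 1$ forces both $\alpha'-1 > 0$ and $\tfrac{\alpha'}{2}-\tfrac14 > 0$, and $y^2 \ge 0$, the bracket is strictly positive, whence $N < 0$. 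Combining the negative numerator with the negative denominator yields the claimed inequality \eqref{eq: lemma1 of q}.

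I expect the only genuine obstacle to be the algebraic bookkeeping in the second step; the substitution $x = \tfrac12 + y$ is exactly what makes it tractable, because it forces $N$ to be an even polynomial and thereby guarantees the cancellation of the cubic terms that would otherwise obscure the sign. As a consistency check, note that $x \in \bigl(0,\, 1-\tfrac{\alpha'}{2}\bigr)$ corresponds to $y \in \bigl(-\tfrac12,\, \tfrac{1-\alpha'}{2}\bigr)$, which is compatible with $N$ depending on $y$ only through $y^2$ and with the denominator sign computed above.
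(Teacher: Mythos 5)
Your proposal is correct and takes essentially the same route as the paper: both combine the two reciprocals into a single fraction over the denominator $(2x-\alpha')\bigl(2-2x-\alpha'\bigr)$, which is negative on the given interval, and show the numerator equals $(1-\alpha')(2x^2-2x+\alpha')$, which is negative because $2x^2-2x+\alpha' = 2\bigl(x-\tfrac12\bigr)^2 + \alpha' - \tfrac12 > 0$. Your substitution $x=\tfrac12+y$ is just a convenient bookkeeping device for this same expansion, and your claimed collapsed form $N=-2(\alpha'-1)\bigl[\bigl(\tfrac{\alpha'}{2}-\tfrac14\bigr)+y^2\bigr]$ does check out, so the two arguments coincide in substance.
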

\begin{proof}
In view of \eqref{eq: q function}, the left side of \eqref{eq: lemma1 of q}
is equal to
\begin{align*}
\frac{\overbrace{(1-\alpha')}^{<0} \;
\overbrace{(2x^2-2x+\alpha')}^{>0}}{\underbrace{(2x-\alpha')}_{<0} \;
\underbrace{(2-2x-\alpha')}_{>0}} > 0.
\end{align*}
\end{proof}

\vspace*{0.1cm}
\begin{lemma} \label{lemma2: q}
If $\alpha' \in (1,2)$, $u,v>0$ and $u+v < 1-\frac{\alpha'}{2}$, then
\begin{align}
\label{eq: lemma2 of q}
\frac1{q(u)}+\frac1{q(1-u-v)}-\frac1{q(1-v)} > 0 .
\end{align}
\end{lemma}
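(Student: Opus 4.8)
The plan is to turn \eqref{eq: lemma2 of q} into a one-dimensional monotonicity statement, so as to avoid clearing the three denominators directly (their signs differ, which makes brute force unpleasant). Write $\phi(x) := \frac{1}{q(x)} = \frac{x(\alpha'-x)}{2x-\alpha'}$ for $x \in [0,1] \setminus \{\tfrac{\alpha'}{2}\}$, so that the left-hand side of \eqref{eq: lemma2 of q} equals $\phi(u) + \phi(1-u-v) - \phi(1-v)$. On the domain $\alpha' \in (1,2)$, $u,v>0$, $u+v < 1-\tfrac{\alpha'}{2}$ one has $u < \tfrac{\alpha'}{2}$, while $1-u-v > \tfrac{\alpha'}{2}$ and $1-v > \tfrac{\alpha'}{2}$ (the latter two because $u+v < 1-\tfrac{\alpha'}{2}$). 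Hence, by \eqref{eq: q function}, $\phi(u) < 0$ and $\phi(1-u-v), \phi(1-v) > 0$, so the substance of the lemma is that the single positive term dominates; crucially, all three arguments stay strictly on one side of the pole $x=\tfrac{\alpha'}{2}$, so $\phi$ is smooth there and the boundary limits taken below are benign.

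The one identity that drives everything is the closed form of the derivative,
\[
\phi'(x) = -\tfrac12 \left( 1 + \frac{\alpha'^2}{(2x-\alpha')^2} \right),
\]
obtained by differentiating $\phi$ (equivalently, by a long division). First I would fix $u$ and establish monotonicity in $v$: setting $G(v) := \phi(u) + \phi(1-u-v) - \phi(1-v)$ gives $G'(v) = \phi'(1-v) - \phi'(1-u-v)$. Since $1-v > 1-u-v > \tfrac{\alpha'}{2}$, and on the branch $x > \tfrac{\alpha'}{2}$ the quantity $(2x-\alpha')^2$ increases with $x$ so that the displayed identity makes $\phi'$ increasing there, one gets $\phi'(1-v) > \phi'(1-u-v)$, i.e.\ $G'(v) > 0$. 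Thus $G$ exceeds its limit as $v \downarrow 0$, which reduces the claim to showing $\phi(u) + \phi(1-u) - \phi(1) > 0$ for $u \in (0, 1-\tfrac{\alpha'}{2})$.

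This reduced inequality is a strengthening of Lemma~\ref{lemma1: q}, and I would prove it by the same device. Put $H(u) := \phi(u) + \phi(1-u)$; the identity yields $H'(u) = \phi'(u) - \phi'(1-u) = \tfrac{\alpha'^2}{2}\bigl( (2-2u-\alpha')^{-2} - (2u-\alpha')^{-2} \bigr)$, whose sign is that of $(2u-\alpha')^2 - (2-2u-\alpha')^2 = 4(2u-1)(1-\alpha')$. For $u \in (0, 1-\tfrac{\alpha'}{2}) \subset (0,\tfrac12)$ and $\alpha' > 1$ this is positive, so $H$ is strictly increasing; since $\phi(0)=0$, its limit as $u \downarrow 0$ is $\phi(1) = \frac{\alpha'-1}{2-\alpha'} > 0$, giving $H(u) > \phi(1)$, which is exactly the reduced inequality. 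Combining the two monotonicities yields $G(v) > 0$, as required. The main obstacle is not any single estimate but organizing the argument so as to avoid clearing the sign-varying denominators; the closed form of $\phi'$ is what makes both monotonicity steps transparent, and the only delicate point is verifying that the hypotheses $u+v < 1-\tfrac{\alpha'}{2}$ and $\alpha'<2$ pin each argument to the correct side of the pole $x=\tfrac{\alpha'}{2}$ (so that the convex branch, where $\phi'$ is increasing, is the relevant one).
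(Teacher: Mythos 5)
Your proof is correct, but it follows a genuinely different route from the paper's. The paper disposes of the lemma in a single algebraic step: using \eqref{eq: q function}, it combines the three fractions into one rational function,
\begin{align*}
\frac{1}{q(u)}+\frac{1}{q(1-u-v)}-\frac{1}{q(1-v)}
= \frac{(2\alpha'u)\,(\alpha'+v-1)\,(u+v-1)}{(2u-\alpha')\,(2-2u-2v-\alpha')\,(2-2v-\alpha')},
\end{align*}
and observes that under the hypotheses the numerator is a product of two positive factors and one negative factor, while the denominator is a product of one negative and two positive factors, so the quotient is positive. You instead prove two one-dimensional monotonicity statements for $\phi = 1/q$, driven by the closed form $\phi'(x) = -\tfrac12\bigl(1+\alpha'^{2}(2x-\alpha')^{-2}\bigr)$ (which I verified is correct): the left side of \eqref{eq: lemma2 of q} is increasing in $v$, which reduces the claim to the boundary inequality $\phi(u)+\phi(1-u)>\phi(1)$, and that in turn follows from monotonicity of $u \mapsto \phi(u)+\phi(1-u)$ on $(0,\tfrac12)$ together with $\phi(0)=0$. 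Each approach has merit: the paper's is shorter and needs no calculus, but the factorization comes out of nowhere and must simply be verified; yours avoids clearing the sign-varying denominators, explains \emph{why} the inequality holds (it only tightens as $v$ grows), and yields strictly more, namely the sharpened form $\frac1{q(u)}+\frac1{q(1-u)} > \frac1{q(1)} = \frac{\alpha'-1}{2-\alpha'} > 0$ of Lemma~\ref{lemma1: q}, plus monotonicity in $v$ of the left side of \eqref{eq: lemma2 of q}. The price is extra bookkeeping---continuity at the $v\downarrow 0$ and $u \downarrow 0$ boundaries, and checking that every argument stays on a fixed side of the pole $x = \tfrac{\alpha'}{2}$---all of which you handle correctly.
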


\begin{proof}
In view of \eqref{eq: q function}, the left side of \eqref{eq: lemma2 of q}
is equal to
\begin{align*}
\frac{\overbrace{(2\alpha'u)}^{>0} \; \overbrace{(\alpha'+v-1)}^{>0} \;
\overbrace{(u+v-1)}^{<0}}{\underbrace{(2u-\alpha')}_{<0} \;
\underbrace{(2-2u-2v-\alpha')}_{>0} \; \underbrace{(2-2v-\alpha')}_{>0}} > 0.
\end{align*}
\end{proof}

\vspace*{0.1cm}
\begin{lemma} \label{lemma3: q}
If $n \geq 2$, $\alpha' \in (1,2)$ and
\begin{align}
\label{eq1: lemma3 of q}
\begin{split}
& t_1, \ldots, t_{n-1}>0, \\[-0.1cm]
& \sum_{k=1}^{n-1} t_k <1-\frac{\alpha'}{2}, \\[-0.1cm]
& t_n=1-\sum_{k=1}^{n-1}t_k
\end{split}
\end{align}
 then
\begin{align}
\label{eq2: lemma3 of q}
\sum_{k=1}^{n} \frac1{q(t_k)} > 0.
\end{align}
\end{lemma}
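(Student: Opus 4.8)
The plan is to prove \eqref{eq2: lemma3 of q} by induction on $n$, using Lemma~\ref{lemma1: q} for the base case and Lemma~\ref{lemma2: q} to run the inductive step. First I would record the sign pattern that makes the claim non-trivial. Since $\alpha' \in (1,2)$ we have $1 - \frac{\alpha'}{2} \in \bigl(0, \frac12\bigr)$, so the hypotheses \eqref{eq1: lemma3 of q} force $t_k \in \bigl(0, 1 - \frac{\alpha'}{2}\bigr) \subset \bigl(0, \frac{\alpha'}{2}\bigr)$ for each $k \in \{1, \ldots, n-1\}$; by \eqref{eq: q function} this gives $q(t_k) < 0$, hence $\frac{1}{q(t_k)} < 0$. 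On the other hand $t_n = 1 - \sum_{k=1}^{n-1} t_k > \frac{\alpha'}{2}$, so $q(t_n) > 0$ and $\frac{1}{q(t_n)} > 0$. Thus \eqref{eq2: lemma3 of q} asserts that the single positive term $\frac{1}{q(t_n)}$ outweighs the $n-1$ negative terms.

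For the base case $n = 2$ the sum is exactly $\frac{1}{q(t_1)} + \frac{1}{q(1 - t_1)}$ with $t_1 \in \bigl(0, 1 - \frac{\alpha'}{2}\bigr)$, which is positive by Lemma~\ref{lemma1: q}. For the inductive step, assume the claim for $n-1$ variables (so $n \geq 3$); I would peel off one small variable and absorb it into $t_n$. Set $u = t_{n-1} > 0$ and $v = \sum_{k=1}^{n-2} t_k > 0$; then $u + v = \sum_{k=1}^{n-1} t_k < 1 - \frac{\alpha'}{2}$, so Lemma~\ref{lemma2: q} applies, and since $1 - u - v = t_n$ and $1 - v = 1 - \sum_{k=1}^{n-2} t_k =: \tilde{t}_{n-1}$ it yields
\begin{align}
\frac1{q(t_{n-1})} + \frac1{q(t_n)} > \frac1{q(\tilde{t}_{n-1})}.
\end{align}
The reduced collection $t_1, \ldots, t_{n-2}, \tilde{t}_{n-1}$ again satisfies the hypotheses \eqref{eq1: lemma3 of q}, now with $n-1$ variables: indeed $t_1, \ldots, t_{n-2} > 0$, $\sum_{k=1}^{n-2} t_k \leq \sum_{k=1}^{n-1} t_k < 1 - \frac{\alpha'}{2}$, and $\tilde{t}_{n-1} = 1 - \sum_{k=1}^{n-2} t_k$ plays the role of the last variable. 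By the induction hypothesis, $\sum_{k=1}^{n-2} \frac1{q(t_k)} + \frac1{q(\tilde{t}_{n-1})} > 0$, and combining this with the displayed inequality gives
\begin{align}
\sum_{k=1}^{n} \frac1{q(t_k)}
= \sum_{k=1}^{n-2} \frac1{q(t_k)} + \frac1{q(t_{n-1})} + \frac1{q(t_n)}
> \sum_{k=1}^{n-2} \frac1{q(t_k)} + \frac1{q(\tilde{t}_{n-1})} > 0,
\end{align}
which completes the induction.

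Since the algebraic heavy lifting is already contained in Lemmas~\ref{lemma1: q} and~\ref{lemma2: q}, the only care needed is bookkeeping: checking that the absorbed variable $\tilde{t}_{n-1}$ lands back in the admissible configuration and that the reduced tuple matches the hypotheses \eqref{eq1: lemma3 of q} exactly, so that the induction hypothesis is applicable. The point requiring the most attention is verifying that the constraint $u + v < 1 - \frac{\alpha'}{2}$ passes to $\sum_{k=1}^{n-2} t_k < 1 - \frac{\alpha'}{2}$ for the reduced problem, but this is immediate since dropping a positive term only decreases the sum. I therefore expect no serious obstacle in the reduction itself; the substantive content lies entirely in the two preceding lemmas.
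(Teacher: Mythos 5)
Your proof is correct and follows essentially the same route as the paper: induction on $n$ with Lemma~\ref{lemma1: q} as the base case and Lemma~\ref{lemma2: q} driving the inductive step, with the same substitution (your $u=t_{n-1}$, $v=\sum_{k=1}^{n-2}t_k$ is the paper's $u=t_n$, $v=\sum_{k=1}^{n-1}t_k$ up to an index shift). The only cosmetic difference is that you apply Lemma~\ref{lemma2: q} first to absorb two terms into $\frac{1}{q(\tilde{t}_{n-1})}$ and then invoke the induction hypothesis, whereas the paper applies the induction hypothesis first and finishes with Lemma~\ref{lemma2: q}; the two orderings are equivalent.
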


\begin{proof}
Lemma~\ref{lemma3: q} is proved by using mathematical induction on $n$.
In view of Lemma~\ref{lemma1: q}, \eqref{eq2: lemma3 of q} holds for $n=2$.
Assuming its correctness for $n$, we have
\begin{align}
\label{eq: assumption of induction}
\sum_{j=1}^{n-1} \frac1{q(t_j)} + \frac1{q(\overline{t}_n)} > 0
\end{align}
where, from \eqref{eq1: lemma3 of q},
$\overline{t}_n = 1 - \sum_{k=1}^{n-1} t_k$.
We prove in the following that \eqref{eq2: lemma3 of q} also
holds for $n+1$ when the constraints in \eqref{eq1: lemma3 of q}
are satisfied with $n+1$, i.e.,
\begin{align}
\label{eq3: lemma3 of q}
\begin{split}
& t_1, \ldots, t_n > 0, \\[-0.1cm]
& \sum_{k=1}^n t_k <1-\frac{\alpha'}{2}, \\[-0.1cm]
& t_{n+1} = 1-\sum_{k=1}^n t_k.
\end{split}
\end{align}
Consequently, the left side of \eqref{eq2: lemma3 of q} is equal to
\begin{align}
\sum_{k=1}^{n+1} \frac1{q(t_k)}
& = \sum_{k=1}^{n-1} \frac1{q(t_k)} + \frac1{q(t_n)} + \frac1{q(t_{n+1})} \nonumber \\
\label{eq: by induction}
& > -\frac1{q(\overline{t}_n)} + \frac1{q(t_n)} + \frac1{q(t_{n+1})} \\
\label{eq: by equality constraint}
& = \frac1{q(t_n)} + \frac1{q\left(1-\sum_{k=1}^n t_k \right)} - \frac1{q(1-\sum_{k=1}^{n-1} t_k)} \\
\label{eq: by lemma 2}
& > 0
\end{align}
where \eqref{eq: by induction} follows from \eqref{eq: assumption of induction};
\eqref{eq: by equality constraint} holds by the equality constraint in
\eqref{eq3: lemma3 of q}; \eqref{eq: by lemma 2} follows from Lemma~\ref{lemma2: q}
by setting $u = t_n$, $v = \sum_{k=1}^{n-1} t_k$ which satisfy
$u+v < 1-\frac{\alpha'}{2}$ in view of \eqref{eq3: lemma3 of q}.
Hence, it follows by mathematical induction that Lemma~\ref{lemma3: q}
holds for every $n \geq 2$.
\end{proof}

In the following, we prove the concavity of $f$ when $q(t_k) \neq 0$
for all $k \in \{1, \ldots, n-1\}$ (recall that the case where there
exits $k \in \{1, \ldots, n-1\}$ such that $q(t_k)=0$ was addressed
in the paragraph before Lemma~\ref{lemma1: q}). Without loss of
generality, we prove that $\nabla^2 f(\underline{t}) \preceq 0$ when
$\bigl(q(t_1), \ldots, q(t_{n-1})\bigr)$ is
a vector whose all entries are distinct. To justify this assumption,
note that since the function $q$ in \eqref{eq: q function} is
monotonically increasing ($q'(t) = \frac1{t^2} + \frac1{(\alpha'-t)^2} > 0$),
we actually restrict ourselves under the latter assumption to the
case where the entries of the vector
$(t_1, \ldots, t_{n-1})$ are all distinct. Otherwise, if some of
the entries of the vector $(t_1, \ldots, t_{n-1})$ are equal,
then the proof that the Hessian matrix is
non-positive definite continues to hold by relying on the satisfiability
of this property when all the entries of $(t_1, \ldots, t_{n-1})$ are distinct,
and from the continuity in $\underline{t}$ of the eigenvalues
of the Hessian matrix $\nabla^2 f(\underline{t})$.

Since the optimization problem in \eqref{opt. prob. with n-1 variables} is
invariant to a permutation of the entries of $\underline{t}$, it is assumed
without loss of generality that
\begin{align} \label{eq: order of q}
q(t_1) < q(t_2) < \ldots < q(t_{n-1}).
\end{align}
In view of \eqref{eq: order of q}, there are only two possibilities: either
\begin{align} \label{case 1.1}
q(t_1) < q(t_2) < \ldots < q(t_{n-2}) < q(t_{n-1}) < 0,
\end{align}
or
\begin{align} \label{case 1.2}
q(t_1) < q(t_2) < \ldots < q(t_{n-2}) < 0 < q(t_{n-1})
\end{align}
as if it was possible that $q(t_{n-2}) \geq 0$, it would have implied
that $q(t_{n-1}) > q(t_{n-2}) \geq 0$ which in turn yields that
$t_{n-1} > t_{n-2} \geq \frac{\alpha'}{2}$. This, however, cannot be
true since otherwise
$$\sum_{k=1}^{n-1} t_k \geq t_{n-2} + t_{n-1} > \alpha' > 1$$
which violates the inequality constraint $\sum_{k=1}^{n-1} t_k \leq 1$
in \eqref{eq: polyhedron}.

\vspace*{0.1cm}
The continuation of this proof relies on Fact~\ref{fact: Bunch} by Bunch
{\em et al.} \cite{Bunch} (see Section~\ref{subsection:Pre - Rank-One Modification}),
and on Lemma~\ref{lemma3: q}. For the continuation of this proof, let
\begin{align} \label{eq: last t}
t_n = 1 - \sum_{k=1}^{n-1} t_k.
\end{align}

\vspace*{0.1cm}
{\em Case~1}:
If \eqref{case 1.1} holds, then \eqref{eq: D and rho} implies that
\begin{align} \label{eq: D is negative definite}
D \prec 0.
\end{align}
\begin{itemize}
\item If $q(t_n) < 0$ then $\rho = q(t_n) {\bf{1}} {\bf{1}}^T  \prec 0$ which,
in view of \eqref{eq1: Hessian of f} and \eqref{eq: D is negative definite},
implies that $\nabla^2 f(t_1, \ldots, t_{n-1}) \prec 0$.
\item Otherwise, if $q(t_n)>0$ then $\rho > 0$ (see \eqref{eq: D and rho} and \eqref{eq: last t});
from \eqref{eq1: Hessian of f} and the interlacing property in
\eqref{eq: Bunch1}, the eigenvalues $\lambda_1, \ldots, \lambda_{n-1}$ of
$\nabla^2 f(\underline{t})$ satisfy
\begin{align} \label{eq1: ineq. for the lambda's}
q(t_1) < \lambda_1 < q(t_2) < \ldots < q(t_{n-2}) < \lambda_{n-2} < q(t_{n-1}) < \lambda_{n-1}
\end{align}
where, in view of the third item of Fact~\ref{fact: Bunch},
the inequalities in \eqref{eq1: ineq. for the lambda's} are strict. From
\eqref{case 1.1} and \eqref{eq1: ineq. for the lambda's}, it follows that
$\lambda_1, \ldots, \lambda_{n-2} < 0$. To prove that
$\nabla^2 f(t_1, \ldots, t_{n-1}) \prec 0$, it remains to show that also
$\lambda_{n-1} < 0$.
In view of the third item of Fact~\ref{fact: Bunch} and
\eqref{eq1: Hessian of f}, the eigenvalues of
$\nabla^2 f(t_1, \ldots, t_{n-1})$ satisfy the equation
\begin{align}
1 + q(t_n) \, \sum_{j=1}^{n-1} \frac1{q(t_j)-\lambda} = 0
\end{align}
which therefore implies that, for all $k \in \{1, \ldots, n-1\}$,
\begin{align} \label{eq: from Bunch}
\sum_{j=1}^{n-1} \frac1{\lambda_k - q(t_j)} = \frac1{q(t_n)}.
\end{align}
Let us assume on the contrary that $\lambda_{n-1} > 0$.
Since it is assumed here that $q(t_n)>0$ then $t_n > \frac{\alpha'}{2}$, and
it follows from \eqref{eq: last t} that
\begin{align} \label{1}
\sum_{k=1}^{n-1} t_k < 1- \frac{\alpha'}{2}.
\end{align}
Since $q(t_j) < 0$ for all $j \in \{1, \ldots, n-1\}$, if $\lambda_{n-1} > 0$, then
in view of \eqref{eq: from Bunch}
\begin{align}
\begin{split} \label{2}
\sum_{j=1}^{n-1} \frac1{-q(t_j)} & \geq \sum_{j=1}^{n-1}
\frac1{\lambda_{n-1} - q(t_j)} \\
& = \frac1{q(t_n)}.
\end{split}
\end{align}
Rearrangement of terms in \eqref{2} yields
\begin{align} \label{3}
\sum_{j=1}^n \frac1{q(t_j)} \leq 0
\end{align}
and, in view of the interior of $\mathcal{D}^{n-1}$ in \eqref{eq: polyhedron},
and \eqref{eq: last t} and \eqref{1},
inequality \eqref{3} contradicts the result in Lemma~\ref{lemma3: q}. This
therefore proves by contradiction that $\lambda_{n-1} < 0$, so all the
$n-1$ eigenvalues of the Hessian are negative, and therefore $f$ is strictly
concave under the assumption in \eqref{case 1.1}.
\end{itemize}

\vspace*{0.2cm}
{\em Case 2}: We now consider the case where \eqref{case 1.2} holds.
Under this assumption,
\begin{align}
\label{eq: q<0}
q(t_n) < 0.
\end{align}
To verify \eqref{eq: q<0},
note that $q(t_{n-1}) > 0$ yields that $t_{n-1} > \frac{\alpha'}{2}$;
assume by contradiction that $q(t_n) \geq 0$, then $t_n \geq \frac{\alpha'}{2}$
(see \eqref{eq: q function}) which implies that
$\sum_{j=1}^n t_j \geq t_n + t_{n-1} > \alpha' > 1$ in contradiction to
the equality $\sum_{j=1}^n t_j=1$ in \eqref{eq: last t}; hence, indeed
$q(t_n) < 0$. Consequently, in view of \eqref{eq1: Hessian of f}, let
\begin{align} \label{eq1: bar C}
\overbar{C} = & \frac1{q(t_n)} \; \nabla^2f(t_1, \ldots, t_{n-1}) \\
\label{eq2: bar C}
&= \overbar{D} + {\bf{1}} {\bf{1}}^T
\end{align}
where
\begin{align} \label{matrix D}
\overbar{D} = \text{diag}\left( \frac{q(t_1)}{q(t_n)}, \ldots,
\frac{q(t_{n-1})}{q(t_n)} \right).
\end{align}
From \eqref{case 1.2} and \eqref{eq: q<0}, it follows that
\begin{align}
\frac{q(t_1)}{q(t_n)} > \frac{q(t_2)}{q(t_n)} > \ldots >
\frac{q(t_{n-2})}{q(t_n)} > 0 > \frac{q(t_{n-1})}{q(t_n)}.
\end{align}
It is shown in the following that $\overbar{C} \succeq 0$ which,
from \eqref{eq: q<0} and \eqref{eq1: bar C}, imply that indeed
$\nabla^2 f(t_1, \ldots, t_{n-1}) \preceq 0$. Let $\{\lambda_k\}_{k=1}^{n-1}$
designate the eigenvalues of $\overbar{C}$; in view of \eqref{eq2: bar C}
and the last two items of Fact~\ref{fact: Bunch}, it follows that
\begin{align} \label{4}
\overbrace{\frac{q(t_{n-1})}{q(t_n)}}^{<0} < \lambda_1
< \overbrace{\frac{q(t_{n-2})}{q(t_n)}}^{>0}
< \lambda_2 < \ldots < \overbrace{\frac{q(t_2)}{q(t_n)}}^{>0}
< \lambda_{n-2} < \overbrace{\frac{q(t_1)}{q(t_n)}}^{>0} < \lambda_{n-1}.
\end{align}
Hence, \eqref{4} asserts that $\lambda_2, \ldots, \lambda_{n-1} > 0$, and it
only remains to prove that $\lambda_1 > 0$.
From the third item of Fact~\ref{fact: Bunch}, and from \eqref{eq1: bar C},
\eqref{eq2: bar C}, \eqref{matrix D}, the eigenvalues
$\{\lambda_k\}_{k=1}^n$ of the rank-one modification $\overbar{C}$ satisfy
the equality
\begin{align} \label{5}
1 + \sum_{j=1}^{n-1} \frac1{\frac{q(t_j)}{q(t_n)} - \lambda_k} = 0
\end{align}
for all $k \in \{1, \ldots, n-1\}$.
Assume on the contrary that $\lambda_1 \leq 0$, then from \eqref{5}
\begin{align}
& 1 + \sum_{j=1}^{n-1} \frac{q(t_n)}{q(t_j)}
\geq 1 + \sum_{j=1}^{n-1} \frac1{\frac{q(t_j)}{q(t_n)} - \lambda_1} = 0. \label{6}
\end{align}
Consequently, from \eqref{eq: q<0} and \eqref{6}, it follows that
$\sum_{j=1}^n \frac1{q(t_j)} \leq 0$ in contradiction to
Lemma~\ref{lemma3: q}. Hence, all $\lambda_k > 0$ for
$k \in \{1, \ldots, n-1\}$, which therefore implies that
$\nabla^2 f(t_1, \ldots, t_{n-1}) \prec 0$ for all $(t_1, \ldots, t_{n-1})$
in the interior of $\mathcal{D}^{n-1}$.
This completes the proof of Proposition~\ref{proposition: concavity of f}.

\section{Derivation of \eqref{eq1s: KKT}--\eqref{eq3s: KKT} From Lagrange Duality}
\label{appendix: Lagrangian}

We consider the convex optimization problem in \eqref{opt. prob. with n-1 variables},
and solve it via the use of the Lagrange duality where strong duality holds.

The Lagrangian of the convex optimization problem in \eqref{opt. prob. with n-1 variables} is given by
\begin{align} \label{eq: Lagrangian}
\begin{split}
& L(t_1,\ldots,t_{n-1};\lambda_1,\ldots,\lambda_n) \\
&= \sum_{k=1}^{n-1} g(t_k) + g\left(1 - \sum_{k=1}^{n-1} t_k \right)
+ \sum_{k=1}^{n-1} t_k \log N_k \\
& \hspace*{0.3cm} + \left(1-\sum_{k=1}^{n-1}t_k \right)\log N_n
+ \sum_{k=1}^{n-1} \lambda_k t_k + \lambda_n \left(1-\sum_{k=1}^{n-1} t_k \right)
\end{split}
\end{align}
where $\lambda \succeq 0$, the function $g$ is defined in \eqref{eq: g}, and $N_k := N_{\alpha}(X_k)$
(see \eqref{eq: N_alpha entries}).

In view of the Lagrangian in \eqref{eq: Lagrangian} and the function $g$ defined in
\eqref{eq: g}, straightforward calculations of
the partial derivatives of $L$ with respect to $t_k$ for $k \in \{1, \ldots, n-1\}$
yields
\begin{align}
\frac{\partial L}{\partial t_k} &= g'(t_k) - g'(1-t_1 - \ldots -t_{n-1}) +
\log \left(\frac{N_{\alpha}(X_k)}{N_{\alpha}(X_n)}\right) + \lambda_k - \lambda_n \nonumber \\
&= -\log \left( t_k \Bigl(1-\frac{t_k}{\alpha'}\Bigr) \right) +
\log \left( t_n \Bigl(1-\frac{t_n}{\alpha'}\Bigr) \right) +
\log\left(\frac{N_{\alpha}(X_k)}{N_{\alpha}(X_n)}\right)+\lambda_k-\lambda_n
\label{eq: partial derivative of L}
\end{align}
where $t_n := 1 - \sum_{k=1}^{n-1} t_k$. By setting the partial derivatives in
\eqref{eq: partial derivative of L} to zero, and exponentiating both sides of the
equation, we get for all $k \in \{1, \ldots, n-1\}$
\begin{align}
\frac{t_n (\alpha' - t_n)}{t_k (\alpha' - t_k)} = \frac{N_{\alpha}(X_n)}{N_{\alpha}(X_k)}
\cdot \exp(\lambda_n - \lambda_k). \label{eq1a: KKT}
\end{align}
In view of \eqref{eq1a: KKT} and the definition of $\{c_k\}_{k=1}^{n-1}$ in \eqref{eq: c_k},
we obtain that for all $k \in \{1, \ldots, n-1\}$
\begin{align}
t_k (\alpha' - t_k) = c_k \, t_n (\alpha' - t_n) \, \exp(\lambda_k - \lambda_n). \label{eq1b: KKT}
\end{align}
Consequently, \eqref{eq1b: KKT}, the definition of $t_n$, and the slackness conditions lead to the
following set of constraints:
\begin{align}
\label{eq1: KKT}
& t_k  \geq 0, \quad k \in \{1, \ldots, n\} \\
\label{eq2: KKT}
& \sum_{k=1}^n t_k = 1 \\
\label{eq3: KKT}
& \lambda_k \geq 0, \quad k \in \{1, \ldots, n\} \\
\label{eq4: KKT}
& \lambda_k t_k = 0, \qquad k \in \{1, \ldots, n\} \\
\label{eq5: KKT}
& t_k (\alpha'-t_k)= c_k t_n (\alpha'-t_n) \, \exp(\lambda_k-\lambda_n),
\quad k \in \{1, \ldots, n-1\}
\end{align}
with the variables $\underline{\lambda}$ and $\underline{t}$ in
\eqref{eq1: KKT}--\eqref{eq5: KKT}.

Consider first the case where
\begin{align} \label{eq:N_k>0}
N_\alpha(X_k)>0,\quad \forall \, k \in \{1,\ldots,n-1\}
\end{align}
which in view of \eqref{eq: c_k}, implies
\begin{align} \label{eq: c_k>0}
c_k >0,\quad \forall \, k \in \{1,\ldots,n-1\}.
\end{align}
Under the assumption in \eqref{eq:N_k>0}, we prove that
\begin{align} \label{all lambda's are zero}
\lambda_k=0, \quad \forall \, k \in \{1, \ldots, n\}.
\end{align}
Assume on the contrary that there exists an index $k$ such that
$\lambda_k \neq 0$. This would imply from \eqref{eq4: KKT} that
$t_k = 0$. If $k=n$ (i.e., if $t_n=0$) then it follows from
\eqref{eq5: KKT} that also $t_k=0$ for all $k \in \{1, \ldots, n\}$
(recall that $\alpha' > 1$),
which violates the equality constraint in \eqref{eq2: KKT}. Otherwise,
if $t_k = 0$ for some $k < n$, then it follows from \eqref{eq5: KKT}
and \eqref{eq: c_k>0} that $t_n=0$ which leads to the same contradiction as above.

The substitution of \eqref{all lambda's are zero} into the right side of
\eqref{eq5: KKT} gives the simplified equation in \eqref{eq1s: KKT}. In
view of \eqref{eq1: KKT} and \eqref{eq2: KKT}, this leads to the simplified
set of KKT constraints in \eqref{eq1s: KKT}--\eqref{eq3s: KKT}.

Finally, if the assumption in \eqref{eq:N_k>0} does not hold, i.e.,
$N_\alpha(X_k)=0$ for some $k \in \{1,\ldots,n-1\}$, then
the optimal solution satisfies $t_k=0$ (with the convention
that $0\cdot \log 0 = 0$) since any other assignment makes
the objective function in \eqref{eq: rewriting f} be equal to $-\infty$.
In addition, in this case $c_k=0$, so the simplified set of KKT
constraints in \eqref{eq1s: KKT}--\eqref{eq3s: KKT} still yields
the optimal solution $\underline{t}$.

\section{On the existence and uniqueness of the solution to \eqref{eq for tn}}
Define
\begin{align} \label{eq:phi}
\phi_{\alpha}(x)=x+\sum_{k=1}^{n-1}\psi_{\alpha,k}(x),\quad x \in [0,1],
\end{align}
and note that we need to show that there exists a unique solution of the equation
$\phi_\alpha(x)=1$ where $x \in [0,1]$. From the continuity of $\phi_\alpha(\cdot)$
and since $\phi_\alpha(0) = 0 $ and
\begin{align} \label{eq: phi at 1 > 1}
\phi_\alpha(1) =1+ \sum_{k=1}^{n-1}\psi_k(1) > 1,
\end{align}
the existence of such a solution is assured. To prove uniqueness, consider two cases:
$\alpha' \geq 2$ and $1 < \alpha' < 2$.

The derivative of $\phi_\alpha(x)$ is given by
\begin{align} \label{eq:phi'}
\phi_{\alpha}'(x) = 1 + \sum_{k=1}^{n-1} \frac{c_k(\alpha'-2x)}{\sqrt{\alpha'^2-4c_k
x(\alpha'-x)}},
\end{align}
so if $\alpha' \geq 2$, then $\phi_\alpha(x)$ is monotonically increasing in $[0,1]$,
hence the solution $t_n \in [0,1]$ of the equation \eqref{eq for tn} is unique.

If $\alpha' \in (1,2)$, then
\begin{align} \label{eq:phi mono}
\phi_\alpha'(x)>0, \qquad x \in [0,\tfrac{\alpha'}{2}].
\end{align}
Note that
\begin{align*}
\alpha'^{\, 2}-4c_kx(\alpha'-x) =\alpha'^{\, 2}(1-c_k)+c_k(2x-\alpha')^2,
\end{align*}
thus in view of \eqref{eq:phi'},
\begin{align} \label{eq:phi' 2}
\phi_\alpha'(x) =1 + \sum_{k=1}^{n-1} \frac{ c_k} {\sqrt{c_k+
 \frac{\alpha'^2(1-c_k)} {4(x-\frac{\alpha'}{2})^2}}}.
\end{align}
Eq.~\eqref{eq:phi' 2} implies that $\phi_\alpha'(\cdot)$ is monotonically decreasing in
$(\frac{\alpha'}{2}, 1]$; in other words, $\phi_\alpha(\cdot)$ is concave in the interval
$(\tfrac{\alpha'}{2}, 1]$).

Assume on the contrary that there are two solutions, $0<x_1<x_2<1$ to \eqref{eq for tn}, i.e.,
\begin{align} \label{eq:assumption}
\phi_\alpha(x_1)=\phi_\alpha(x_2)=1.
\end{align}
Eq.~\eqref{eq:assumption} implies that there exists $c \in (x_1, x_2)$
such that $\phi_{\alpha}'(c)=0$ and from \eqref{eq:phi mono}, $c \in (\frac{\alpha'}{2}, x_2)$.
Since $\phi_\alpha'(\cdot)$ is monotonically decreasing in $(\frac{\alpha'}{2}, 1]$, it follows that
$\phi_{\alpha}'(x)<0$ for all $x \in (c,1)$. Hence, $\phi_\alpha(\cdot)$ is monotonically
decreasing in $(x_2,1)$, which leads to the contradiction $$1<\phi_\alpha(1)<\phi_\alpha(x_2)=1.$$
This therefore demonstrates the uniqueness of the solution in both cases.

\label{appendix: Exist and Uniq}

\section{On the Asymptotic Equivalence of \eqref{eq: tightest REPI} and \eqref{BV bound - general n}}
\label{appendix: Tightest REPI meets BV}
If $N_\infty(X_k)=0$ for all $k \in \{1,\ldots,n\}$, the bounds in \eqref{eq: tightest REPI}
and \eqref{BV bound - general n} obviously coincide asymptotically as $\alpha \to \infty$.
In addition, in this case, the condition in \eqref{Tightest REPI meets BV} clearly holds
as well. It is therefore assumed that $N_\infty(X_k)$ is strictly positive for at least one
value of $k \in \{1,\ldots,n\}$ which, under the assumption in \eqref{eq: last entry of N is maximal},
yields that
\begin{align} \label{eq: max N_infty not 0}
 N_\infty(X_n)> 0.
\end{align}

Let $c_k^\star$ be defined as
\begin{align} \label{eq: c_k^infty }
c_k^\star=\lim\limits_{\alpha \to \infty}
\frac{N_\alpha(X_k)}{N_\alpha(X_n)}=\frac{N_\infty(X_k)}{N_\infty(X_n)}.
\end{align}
In view of \eqref{eq: c_k^infty }, the condition in \eqref{Tightest REPI meets BV}
is equivalent to
\begin{align} \label{eq: c_k condition}
\sum_{k=1}^{n-1}c_k^\star \leq 1.
\end{align}
Hence, it remains to show that the the tightest R-EPI in \eqref{eq: tightest REPI} and
the BV bound in \eqref{BV bound - general n} asymptotically coincide, by letting
$\alpha \to \infty$, if and only if the condition in \eqref{eq: c_k condition} holds.

Let $\phi_\alpha \colon [0,1] \to \Reals$ be the function defined in \eqref{eq:phi}
for $\alpha \in (1, \infty)$, and define
\begin{align}
\label{eq: phi infty}
&\phi_\infty(x)=\lim\limits_{\alpha \to \infty}\phi_\alpha(x)
\end{align}
for $x \in [0,1]$.
In view of \eqref{eq: psi_k}, \eqref{eq:phi} and \eqref{eq: c_k^infty },
the limit in \eqref{eq: phi infty} is given by
\begin{align} \label{eq: phi infty equals}
\phi_\infty(x)= x+ \tfrac12 \, \sum_{k=1}^{n-1} \left(1 - \sqrt{1 - 4 c_k^\star \, x (1-x)} \, \right)
\end{align}
for $x \in [0,1]$.
Recall that under the assumption in \eqref{eq: last entry of N is maximal}, the selection of
$t_n=1$ in \eqref{eq: rewriting f} leads to the BV bound in \eqref{BV bound - general n}. Hence,
in view of \eqref{eq for tn}, if $t=1$ is the unique solution of
\begin{align} \label{eq: phi_infty is 1}
\phi_\infty(t)=1, \quad t \in [0,1]
\end{align}
then the bounds in \eqref{eq: tightest REPI} and \eqref{BV bound - general n} asymptotically coincide
by letting $\alpha \to \infty$.
Note that,
\begin{align}
\label{eq: phi_infty at 0}
\phi_\infty(0) = 0, \\[0.2 cm]
\label{eq: phi_infty at 1}
\phi_\infty(1) = 1.
\end{align}
From \eqref{eq: phi_infty at 1}, $t=1$ is a solution of \eqref{eq: phi_infty is 1} regardless of the sequence
$\{c_k^\star\}$. Moreover, from \eqref{eq: phi infty equals},
\begin{align} \label{eq: phi's derivative}
\phi_\infty'(x) =1 + \sum_{k=1}^{n-1} \frac{c_k^\star \, (1-2x)}{\sqrt{1-4c_k^\star \, x(1-x)}},
\end{align}
so
\begin{align}
\label{eq: phi's derivative 2}
&\phi_\infty'(x)>0, \quad \forall \, x \in (0,\tfrac12), \\
\label{eq: phi's derivative 3}
&\phi_\infty'(1)=1-\sum_{k=1}^{n-1} c_k^\star.
\end{align}
The function $\phi_\infty'(\cdot)$ is monotonically decreasing in the interval
$[\tfrac12,1]$; this concavity property
of $\phi_{\infty}$ can be justified by Appendix~\ref{appendix: Exist and Uniq} since
the function $\phi_{\alpha}(\cdot)$ is concave in $[\tfrac{\alpha'}{2}, 1]$ and
$\alpha' \to 1$ by letting $\alpha \to \infty$.
Thus, if the condition in \eqref{eq: c_k condition} holds, then $\phi_\infty'(x)>0$ for all
$ x \in (0,1)$ which, in view of \eqref{eq: phi_infty at 1}, yields that $t=1$ is the unique
solution of \eqref{eq: phi_infty is 1}. This implies that the tightest R-EPI in
\eqref{eq: tightest REPI} and the BV bound in \eqref{BV bound - general n} asymptotically
coincide by letting $\alpha \to \infty$.

\vspace*{0.2cm}
To prove the 'only if' part, one needs to show that if the condition in \eqref{eq: c_k condition}
does not hold then the bounds in \eqref{eq: tightest REPI} and
\eqref{BV bound - general n} do not coincide asymptotically in the limit where $\alpha \to \infty$;
in the latter case, we prove that our bound in \eqref{eq: tightest REPI} is tighter than
\eqref{BV bound - general n}. If \eqref{eq: c_k condition} does not hold, then
\eqref{eq: phi's derivative 3} implies that
\begin{align} \label{eq: phi's derivative 4}
\phi_\infty'(1) < 0.
\end{align}
Hence, from \eqref{eq: phi_infty at 1}, there exists $x_0 \in (0,1)$ such that $\phi_\infty(x_0)>1$
which, in view of \eqref{eq: phi_infty at 0} and the continuity of $\phi_\infty(\cdot)$, implies
that there exists $t \in (0,x_0)$ which is a solution of \eqref{eq: phi_infty is 1}. This implies
that there are two different solutions of \eqref{eq: phi_infty is 1} in the interval $[0,1]$.
Let $t^{(1)} \in (0,1)$ and $t^{(2)}=1$ denote such solutions, i.e.,
\begin{align} \label{t1<t2}
t^{(1)}<t^{(2)}=1.
\end{align}
Note that there are no solutions of the equation $\phi_{\infty}(t)=1$ in $[0,1]$, except for
$t^{(1)}$ and $t^{(2)}=1$ since $\phi_{\infty}(\cdot)$ is monotonically increasing in $[0, \tfrac12]$
and it is concave in $[\tfrac12, 1]$ with $\phi_{\infty}(1)=1$.

We need to show that $t^{(1)}$ leads to an R-EPI which is tighter than the R-EPI in
\eqref{BV bound - general n}; the bound in \eqref{BV bound - general n} corresponds
to $t^{(2)}=1$ under the assumption in \eqref{eq: last entry of N is maximal}. For
every $\alpha>1$, let $t(\alpha)$ be the unique solution of \eqref{eq for tn} (see
Appendix~\ref{appendix: Exist and Uniq}). It follows that the limit of any convergent subsequence
$\{t(\alpha_n)\}$, as $\alpha_n \to \infty$, is either $t^{(1)} \in (0,1)$
or $t^{(2)}=1$. In the sequel, if the condition in \eqref{Tightest REPI meets BV} is not
satisfied, we show that every such subsequence tends to $t^{(1)} \in (0,1)$, which therefore
implies that
\begin{align} \label{lim t_n alpha}
\lim\limits_{\alpha \to \infty} t(\alpha) = t^{(1)} < 1.
\end{align}
From \eqref{eq: phi's derivative 4} and the continuity of $\phi_\infty(\cdot)$,
it follows that there exists $\delta>0$ such that
\begin{align} \label{phi infty > 1}
\phi_\infty(x)>1,\; \forall \, x \in (1-\delta,1).
\end{align}
In addition, since $\phi_\alpha(\cdot)$ is continuous in $\alpha$ for every $x \in [0,1]$, it follows
from \eqref{phi infty > 1} that there exists $\alpha_0>1$ such that $ \phi_\alpha(x)>1$
for all $\alpha>\alpha_0$ and $x \in (1-\delta,1]$ (note that the rightmost point is included in this
interval in view of \eqref{eq: phi at 1 > 1}).
Hence, since by definition $\phi_{\alpha}\bigl(t(\alpha)\bigr)=1$ for all $\alpha \in (1, \infty)$
then $t(\alpha)\leq 1-\delta$ for all $\alpha>\alpha_0$. This therefore proves that every
subsequence $\{t(\alpha_n)\}$ tends to $t^{(1)}$ as $\alpha_n \to \infty$ (since it cannot
converge to $t^{(2)}=1$), which yields \eqref{lim t_n alpha}. Hence, the R-EPI in
Theorem~\ref{theorem: tightest REPI} asymptotically yields a tighter bound than \eqref{BV bound - general n}
when $\alpha \to \infty$; this therefore proves the 'only if' part of our claim.

%\eject

\end{document}